\documentclass{article} 
\usepackage{iclr2026_conference,times}

\usepackage{amsmath,amsfonts,bm}

\def\eqref#1{(\ref{#1})}

\def\1{\bm{1}}

\DeclareMathAlphabet{\mathsfit}{\encodingdefault}{\sfdefault}{m}{sl}
\SetMathAlphabet{\mathsfit}{bold}{\encodingdefault}{\sfdefault}{bx}{n}

\DeclareMathOperator*{\argmax}{arg\,max}
\DeclareMathOperator*{\argmin}{arg\,min}

\DeclareMathOperator{\sign}{sign}

\usepackage{hyperref}
\usepackage{url}
\usepackage{custom_formats}

\usepackage{amssymb, amsmath, amsthm}
\usepackage{amsfonts}
\usepackage{subfigure}

\usepackage{booktabs}
\usepackage{multirow}

\usepackage{minitoc}

\usepackage{algorithm}

\usepackage{algpseudocode}
\usepackage{enumitem}
\usepackage{bbm}

\usepackage{colortbl}

\usepackage{mathtools}
\usepackage{amsthm}

\newtheorem{theorem}{Theorem}

\newtheorem{lemma}{Lemma}
\newtheorem{proposition}{Proposition}

\usepackage{tikz}
\usetikzlibrary{shapes,arrows,shadows,positioning,decorations.pathreplacing,trees,calc,patterns}

\newif\ifcomments

\commentstrue

\ifcomments
\newcommand{\colornote}[3]{{\color{#1}\bf{#2: #3}\normalfont}}

\else
\newcommand{\colornote}[3]{}
\fi

\title{Routing, Cascades, and User Choice for LLMs}

\author{Rafid Mahmood \\
  University of Ottawa \& NVIDIA \\
  \texttt{mahmood@telfer.uottawa.ca} \\
}

\iclrfinalcopy 
\begin{document}

\maketitle

\begin{abstract}
  To mitigate the trade-offs between performance and costs, LLM providers route user tasks to different models based on task difficulty and latency. We study the effect of LLM routing with respect to user behavior.
  We propose a game between an LLM provider with two models (standard and reasoning) and a user who can re-prompt or abandon tasks if the routed model cannot solve them. The user's goal is to maximize their utility minus the delay from using the model, while the provider minimizes the cost of servicing the user.
  We solve this Stackelberg game by fully characterizing the user best response and simplifying the provider problem.
  We observe that in nearly all cases, the optimal routing policy involves a static policy with no cascading that depends on the expected utility of the models to the user.
  Furthermore, we reveal a misalignment gap between the provider-optimal and user-preferred routes when the user's and provider's rankings of the models with respect to utility and cost differ.
  Finally, we demonstrate conditions for extreme misalignment where providers are incentivized to throttle the latency of the models to minimize their costs, consequently depressing user utility.
  The results yield simple threshold rules for single-provider, single-user interactions and clarify when routing, cascading, and throttling help or harm.
\end{abstract}

\section{Introduction}

Large language models (LLMs) have exploded into general-purpose technologies developed by a small set of model providers that permit users to perform diverse applications via a natural language interface \citep{maslej2025artificial}.
Often users can subscribe to a provider, who affords users access to a family of models differentiated by quality and latency\footnote{For a list of LLM subscriptions, see: \url{https://research.aimultiple.com/llm-pricing/\#comparing-llm-subscription-plans}.}.
Providers face inference costs for each use that vary depending on the model and hardware \citep{leviathan2023fast}.
This presents a tiered ecosystem with a quality-latency-cost trade-off.

Providers navigate these trade-offs via LLM routing and cascading \citep{chen2023frugalgpt, hu2024routerbench}. Routing involves selecting an appropriate model per-use by reserving difficult tasks for larger, more expensive models and easier tasks for smaller, cheaper models. Cascading routes tasks over multiple rounds to smaller models first and escalating to larger ones if necessary.
Routing has become industry practice, e.g., GPT-5 routes tasks between `a smart, efficient model that answers most questions, [and] a deeper reasoning model for harder problems' \citep{openai2025introducinggpt5}.

While routing literature includes algorithms to estimate LLM performance and optimize the quality-latency-cost trade-off \citep{dinghybrid, ongroutellm, dekoninckunified}, they often treat user response as exogenous.
However, the prompt-based interface of LLMs implies a recurring cost for model failures if users repeatedly interact with the system \citep{naor1969regulation, garnett2002designing}.
Depending on the task, if a model fails to complete it, users may be patient and prompt the model again, or give up on the task \citep{castro2023human, wester2024ai, krishna2024understanding, fu2025first}.
In the worst case, user dissatisfaction may lead to un-subscription from the provider service.
Consequently, optimizing for single-pass costs may be penny-wise but welfare-foolish.

In this paper, we study the welfare gap between a single LLM provider with two models (standard and reasoning) and a user under a multi-round LLM routing game \citep{simaan1973stackelberg}. Given a task, the provider determines an initial model and a cascade rule on whether to escalate the task if the standard model fails. For each pass, the provider pays a compute cost and the user pays a latency cost (i.e., delay from model response).
If the model fails to complete the task, the user may either re-prompt or churn the system, which imposes potential expected lost future revenue to the provider.

We prove the Stackelberg equilibrium, by first characterizing the user best response of when to churn versus re-prompt. We then show that the provider's problem reduces to a single-variable optimization problem that in many regimes yields simple threshold-based routing policies. Importantly, we show that cascades are rarely optimal outside specific regimes where the two models are sufficiently differentiated in value minus latency.
Finally, we study the operational implications of simple routing heuristics, provider-user misalignment, and the effect of throttling latency.
Figure \ref{fig:guidelines} summarizes the key insights of expected user behavior and consequent routing policies.
We observe:

\begin{itemize}
  \item \textbf{User patience and sensitivity depend on the model net values.}
    When both models deliver positive (respectively, negative) value, users stay (respectively, churn) regardless of routing. When the models differ, users churn or stay depending on the cascade likelihood.

  \item \textbf{Optimal routing often collapses to simple threshold rules.}
    When models deliver similar value, the optimal policy is to route with no cascade to one of the models. In the mixed cases, some regimes reveal routing to the standard model and then cascading is optimal.

  \item \textbf{Users receive sub-optimal utility when the model quality and provider costs rank differently.}
    Misalignment arises when the models deliver different net value, when churn penalties are low, or when cost-per-success rankings conflict with user utility rankings.

  \item \textbf{Low churn penalties can incentivize providers to throttle latency.}
    If the likelihood of users in unsubscribing is low, then providers can lower their costs by discouraging repeated use and increasing latency, thereby deteriorating user utility.
\end{itemize}

\begin{figure}[t]
  \centering
  \includegraphics[width=0.9\textwidth]{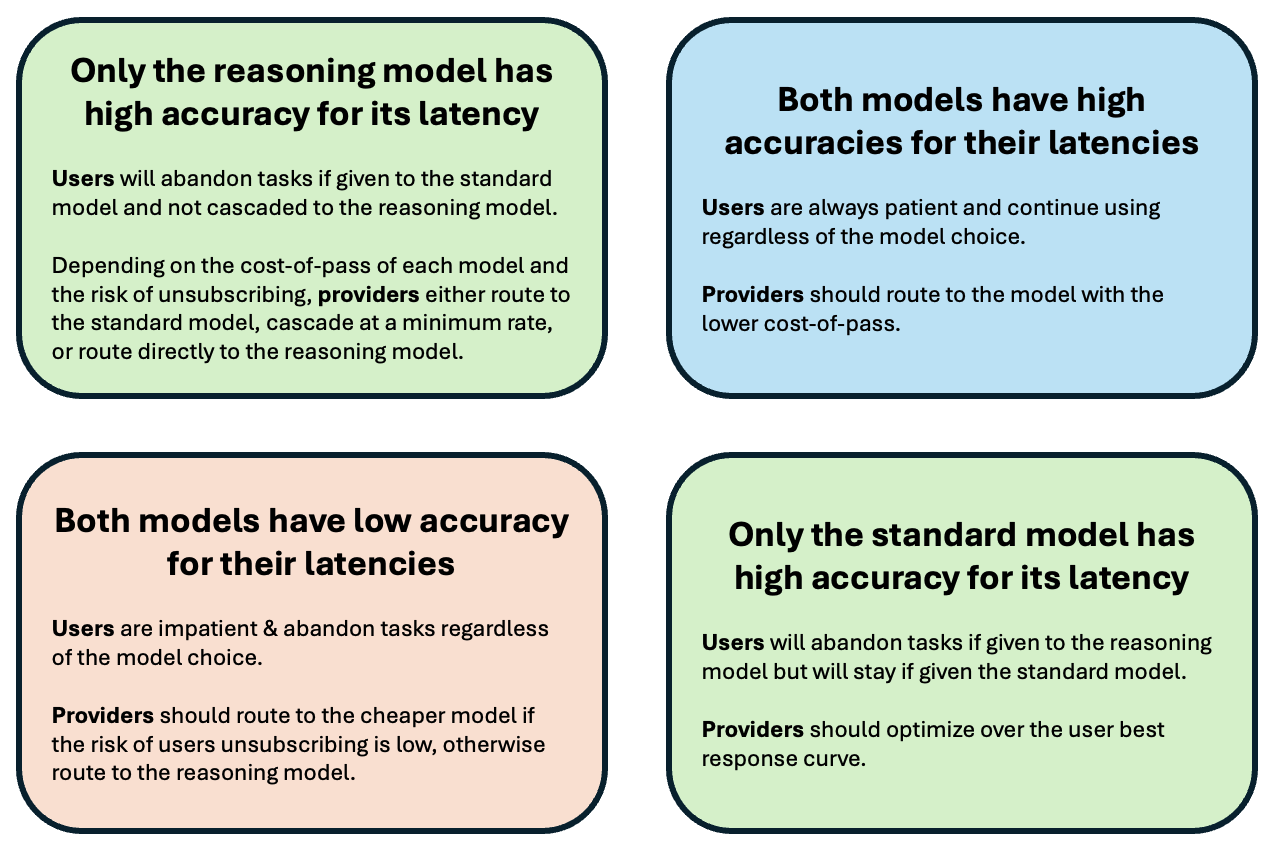}
  \vspace{-2mm}
  \caption{Key guidelines for optimal routing in the face of reactive users. User behavior depends on the state of the two models in terms of providing utility versus the delay incurred from inference compute. Provider policies must follow different thresholding rules in each region.}
  \label{fig:guidelines}
  \vspace{-4mm}
\end{figure}

\section{Related Literature}


\textbf{Model routing and cascading.}
LLM routing and cascades aim to allocate compute across heterogeneous models to optimize accuracy, latency, and cost for each query \citep{chen2023frugalgpt, dinghybrid, yue2023large, hu2024routerbench, dekoninckunified}.
\citet{dinghybrid} reduce calls to expensive models by predicting the difficulty of tasks and allocating easier tasks to cheaper models.
Cascades extend the routing approach to escalate tasks to stronger models only when cheaper models disagree \citep{chen2023frugalgpt, yue2023large}.
\citet{dekoninckunified} propose a unified optimization algorithm for routing and cascading to show optimal routing strategies when the overall quality of a model output with respect to a task can be estimated.
This problem has grown recently to include benchmarks for tracking the performance of routers under cost and task completion \citep{hu2024routerbench}.
Our paper builds on the behavioral aspect of routing to study when cost-optimal routing acts in a user's best interest. Furthermore, we model the problem by treating the AI interaction via a multi-round prompting game \citep{castro2023human, mahmood2024pricing}, which naturally incorporates the single-round mechanism as a special case.


\textbf{Operations and costs of LLM systems.}
The deployment of LLM systems must factor for user engagement with respect to performance, costs, and latencies \citep{bergemann2025economics, zhang2025cloud}. For example, prompt-based interfaces permit users to interact with models via multiple rounds until a task is completed \citep{bai2024mt}. Consequently, the cost of a model, is evaluated via a \emph{cost-of-pass}, which tracks the inference cost of a single pass of a model over the probability of the model completing the task in that pass \citep{mahmood2024pricing, erol2025cost}.
The number of passes in which a user interacts is also highly dependent on model quality with respect to different metrics \citep{castro2023human}.
Relatedly, \citet{shirali2025burden} study interactive alignment as a Stackelberg game and show how explicit user-side costly signals can reduce the user burden of being understood.
Finally, our work ties into the broader literature on game theory for LLM use \citet{sun2025game}.

\section{Main problem}
\label{sec:model}

In this section, we formulate a Markov model of the Stackelberg game between an LLM provider routing tasks and a user deciding whether to continue using the LLM.
We derive closed-form expressions for expected latencies, utilities, and costs. We motivate with the following example.

\textbf{Motivating scenario.} Consider a user of an LLM subscription who is using the LLM to help prove a theorem. The user submits an initial prompt describing the theorem statement and the proof approach they have in mind, and the provider routes the request to either a standard model or a slower reasoning model. After seeing the output, the user either accepts it and stops, or continues the session with a follow-up prompt (e.g., ``justify the inequality in line 3'') if the proof is incomplete or incorrect. The provider routes the follow-up prompt again, potentially escalating to the reasoning model. If this repeats for multiple prompts, the user may decide to stop outsourcing this task to the LLM; if this experience repeats for multiple theorems, they may eventually cancel the subscription.

\subsection{Markov model of LLM routing with user response}

Consider a provider with two LLMs (e.g., standard and reasoning models) denoted $M_i$ for $i \in\{1, 2\}$.
A single use of an LLM incurs a monetary cost  $c_i \in [0, 1]$ (i.e., inference compute) to the provider and a latency cost $t_i \in [0, 1]$ (i.e., opportunity cost of delay) to the user; we assume the costs are normalized to a standard monetary unit.
Given a task from the user, each LLM has a success probability $p_i \in (0, 1)$ of successful completion. We define success as a Bernoulli trial \citep{mahmood2024pricing, erol2025cost}. 
Finally, we assume $t_1 < t_2$, $c_1 < c_2$, and $0<p_1<p_2<1$.

If an LLM successfully completes a task, the user receives value $V > 0$. Alternatively, if the user abandons the task, the provider incurs a penalty $P > 0$ (e.g., if a user abandons the task, they may unsubscribe from the LLM service entirely with some probability, incurring an expected loss in future revenue for every abandoned task).
Finally, we define the user net value per pass $\xi_i := Vp_i - t_i$ for each LLM as the expected value received from a single pass of the LLM. We describe $M_i$ as \emph{value-dominated} if $\xi_i > 0$ and \emph{latency-dominated} otherwise.

Figure \ref{fig:markov_model} summarizes the Markov model.
At time $\tau$, let $X_\tau \in \{ M_1, M_2, \text{Success}, \text{Abandon} \}$ denote the state. Furthermore, let $\tau^*$ be the time of entering one of the two absorption states.
Upon receiving a task at $\tau = 0$, the LLM provider commits to a routing and cascading policy $(i, s)$, where $i \in\{1, 2\}$ indicates the initial model to which the task is routed, and $s \in [0, 1]$ is the probability of cascading the task from $M_1$ to $M_2$ if $M_1$ fails to complete the task.
Given the provider's policy, the user determines an abandonment policy $q \in [0, 1]$, which is the probability that the user leaves the game without completing the task.
We repeat the following sequence:
\begin{enumerate}
  \item The provider routes to state $M_i$, incurring user delay $t_i$ and provider cost $c_i$. The model successfully completes the task with probability $p_i$, which transitions to the Success state and yields the user value $V$.

  \item If the model fails, the user abandons the task with probability $q$, which transitions to the Abandon state and incurs provider penalty $P$. If the user remains and the provider initially tried $M_1$, they can cascade to $M_2$ with probability $s$ or stay at $M_1$ with probability $1-s$.
\end{enumerate}

Our key assumption is that the user observes the routing and cascading strategy before deciding a stationary abandonment policy. In practice, users may abandon adaptively with less information.
Furthermore, we assume that the success probabilities $p_i$ are i.i.d. for any given task, whereas in practice, the model output and user feedback may allow $p_i$ to change over time. Our assumption follows standard iterative user interaction \citep{mahmood2021low, erol2025cost}.
Finally, we note that the provider does not need to observe ground-truth correctness of an output to cascade. Model failure is interpreted through user-side signals (e.g., the user continues the session with follow-up prompts or provides explicit negative feedback), which the provider can act upon \citep{shirali2025burden}.

\begin{figure}[t]
  \centering
  \includegraphics[width=0.8\textwidth]{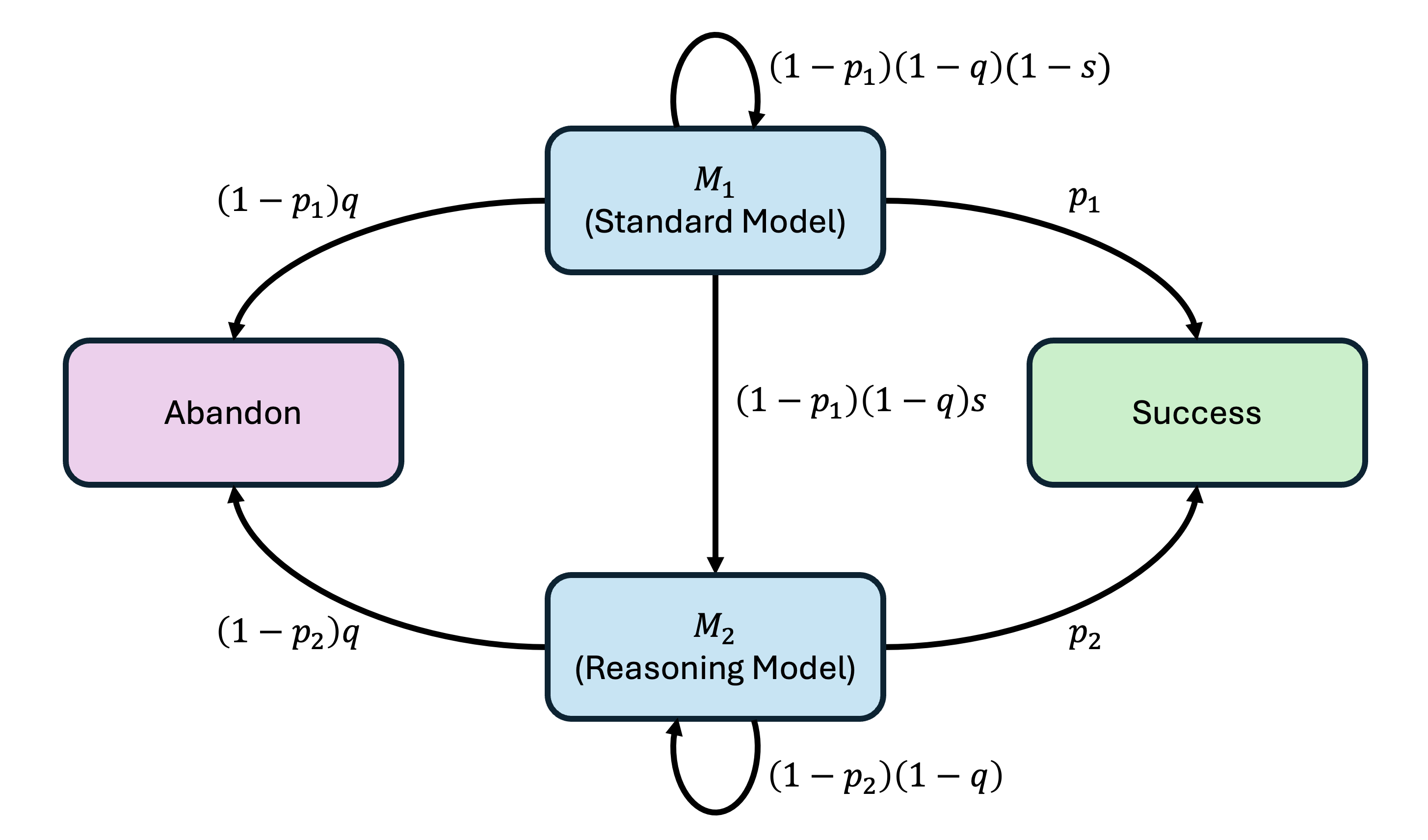}
  \vspace{-2mm}
  \caption{Markov model of LLM provider-user interaction. }
  \label{fig:markov_model}
  \vspace{-4mm}
\end{figure}

\subsection{User Problem}

Let $\alpha(q) := \Pr\{X_{\tau+1} \in \{M_1, M_2\} \;|\; X_\tau = M_1\} = (1-p_1)(1-q)$ be the probability that $M_1$ fails a task and the user stays.
Let $\beta(q) := \EX[\sum_{\tau<\tau^*}\Ind\{ X_\tau = M_2\} \;|\; X_0 = M_2] = 1 / (p_2 + (1-p_2)q)$ be the expected number of times that $M_2$ is called before absorption, when initially routed to $M_2$.
Furthermore, we define $S_i(s, q) := \Pr\{ X_{\tau^*} = \text{Success} \;|\; i \}$ and $L_i(s, q) := \EX[ \sum_{\tau < \tau^*} t_{\tau} \;|\; i]$ as the success probability and total delay, respectively, for the user given a routing policy $(i, s)$ and abandon policy $q$.
That is, $S_i(s,q)$ is the likelihood that the total session eventually absorbs into \emph{Success}, and $L_i(s,q)$ is the expected cumulative latency paid along the sequence of retries and possible escalation.
We omit dependency on $s$ and $q$ from these terms when it is obvious.
Finally, depending on the initial model, the user success probability and total delay yield closed-form expressions:
\begin{align*}
  S_1(s, q) = \frac{p_1 + \alpha(q) \beta(q) p_2 s}{1 - \alpha(q)(1-s)},   &\; S_2(q) = \beta(q) p_2, &
  L_1(s, q) = \frac{t_1 + \alpha(q) \beta(q) t_2 s}{1 - \alpha(q)(1-s) } ,     &\; L_2(q) = \beta(q) t_2
\end{align*}

Given provider and user policies, the user's expected utility is the expected value minus total delay
\begin{align*}
  U_i(s, q) := \EX\left[ V \Ind\{ X_\tau = \text{Success} \} - \sum_{\tau < \tau^*} t_{\tau} \;\Bigg|\; i \right] = V S_i(s, q) - L_i(s, q)
\end{align*}
Implicitly, $V$ quantifies the trade-off in utility from successful completion of the task versus the delay from the use of the model.
In practice for LLM subscriptions, users pay a fixed fee for use and there is no explicit monetary cost to users per query. As a result, latency is the only marginal cost to the user at interaction time. 
The user's problem is, given a provider policy $(i, s)$, to determine an abandonment policy that maximizes their utility $q^*(i, s) = \argmax_q U_i(s, q)$.

\subsection{Provider Problem}

Let $C_i(s, q) := \EX[ \sum_{\tau < \tau^*} c_\tau \;|\; i ]$ be the service cost incurred to the provider from trying to complete the task, which yields the closed-form expression
\begin{align*}
  C_1(s, q) = \frac{c_1 + \alpha(q) \beta(q) c_2 s}{ 1 - \alpha(q)(1-s)}  & \qquad C_2(q) = \beta(q) c_2
\end{align*}

The provider incurs cost equal to the expected service cost plus the penalty from user abandonment
\begin{align*}
  J_i(s, q) := \EX\left[ \sum_{\tau < \tau^*} c_\tau + P \Ind\{ X_\tau = \text{Abandon} \} \;\Bigg|\; i \right] = C_i(s, q) + P(1- S_i(s, q))
\end{align*}
The provider's problem is to route to minimize their cost, assuming the user is maximizing utility, i.e., $(i^*, s^*) = \argmin_{(i, s)} \{ J_i(s, q^*(i, s)) \;|\; q^*(i, s) = \argmax_q U_i(s, q) \}$.

\section{Characterizing Provider and User Policies}
\label{sec:stackelberg}

We characterize the Stackelberg equilibrium by deriving the user best response and the optimal routing policy. When both models are either value- or latency-dominated, user behavior is static and the provider prefers a single, no-escalation route. If the models differ in state, user behavior exhibits threshold structures and the provider's policy almost always reduces to routing without cascading.

\subsection{User Best Response}
\label{sec:stackelberg_user}

We characterize the user best response given a provider policy $(i, s)$. First, when routing to $M_2$, the provider cannot cascade further, and user response depends on if $M_2$ yields positive net value. This can be also interpreted as the user dynamics when faced with a single model in isolation.

\begin{theorem}\label{thm:user_m2}
  If the provider sets $i = 2$, then the user best response is $q^*(2, s) = \Ind \left\{ \xi_2 < 0 \right\}$.
\end{theorem}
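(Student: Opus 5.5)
We write the user's utility under routing to $M_2$ in closed form and show that it is monotone in $q$, with the direction set by the sign of $\xi_2$. Using $S_2(q)=\beta(q)p_2$ and $L_2(q)=\beta(q)t_2$ from the previous section,
\begin{align*}
  U_2(q) = V S_2(q) - L_2(q) = \beta(q)\,(Vp_2 - t_2) = \frac{\xi_2}{p_2 + (1-p_2)q}.
\end{align*}
The cascade parameter $s$ does not appear, since the chain never leaves $M_2$ except by absorption. So the user's problem reduces to maximizing a one-variable function of $q$ on $[0,1]$.

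The key step is to observe that the denominator $g(q) := p_2 + (1-p_2)q$ is strictly positive and strictly increasing on $[0,1]$. Positivity holds because $p_2 \in (0,1)$ gives $g(q) \ge p_2 > 0$, and strict increase holds because $1-p_2>0$. Hence $\beta(q)=1/g(q)$ is strictly decreasing and positive, taking values in $[1, 1/p_2]$. We then split into cases on the sign of $\xi_2$:
\begin{itemize}
  \item If $\xi_2>0$, then $U_2$ is strictly decreasing in $q$, so the unique maximizer is $q=0$.
  \item If $\xi_2<0$, then $U_2 = \xi_2\beta(q)$ is strictly increasing in $q$, so the unique maximizer is $q=1$. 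Its value is $\xi_2$, which corresponds to a single pass followed by abandonment.
\end{itemize}
Both cases agree with $\Ind\{\xi_2<0\}$. A derivative check, $U_2'(q) = -\xi_2(1-p_2)/g(q)^2$, confirms the monotonicity directly.

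The only delicate point is the boundary case $\xi_2=0$. There $U_2\equiv 0$ and every $q$ is optimal, so the argmax is not unique. We would resolve this by adopting the tie-breaking convention implicit in the indicator, namely that the user stays ($q^*=0$) when indifferent, and say so explicitly in the proof. Apart from this, the argument is routine; the main check is that the closed form for $\beta(q)$ remains valid at $q=1$, where it gives $\beta(1)=1$, and it does because $g(1)=1$.
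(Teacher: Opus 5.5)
Your proposal is correct and follows essentially the same route as the paper, which differentiates $U_2(q)$ to get $dU_2/dq = -\xi_2(1-p_2)/(p_2+(1-p_2)q)^2$ and reads off $q^*=1$ when $\xi_2<0$ and $q^*=0$ otherwise. Your explicit handling of the tie at $\xi_2=0$ matches the paper's wording that $q^*=0$ is \emph{an} optimal response when the derivative is non-positive.
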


Theorem \ref{thm:user_m2} shows that if the provider immediately routes a task to the better (i.e., reasoning) model, user behavior collapses to determining whether the model is value-dominated (i.e., $\xi_2 > 0$) or latency-dominated (i.e., $\xi_2 < 0)$). If the user observes that the delay incurred from using the model is greater than the expected value that the user receives from the model's output to the task, then the user will abandon the task. For example in a coding or mathematics task, the user may find that they can manually solve the problem rather than query a model in the same time. This behavior is characterized by $V$, which is an internal variable qualitatively defined by the user.

We next characterize the user response policy if the provider routes to $M_1$ and provides a cascading policy $s$. Here, the user response depends on the interplay between $\xi_1$ and $\xi_2$.

\begin{theorem}\label{thm:user_m1_regimeABCD}
If the provider sets $i = 1$, then the user best response depends on $\xi_1$ and $\xi_2$:
\begin{enumerate}
  \item If $\xi_1, \xi_2 \geq 0$, then the user best response is $q^*(1, s) = 0$ for all $s$.
  \item If $\xi_1, \xi_2 \leq 0$, then the user best response is $q^*(1, s) = 1$ for all $s$.

  \item If $\xi_1 < 0 < \xi_2$, then let $s_0 := - \xi_1 / (\xi_2/p_2 - \xi_1)$. The user best response is $q^*(1, s) = \Ind \left\{ s < s_0 \right\}$.

  \item If $\xi_1 > 0 > \xi_2$, then let
    \begin{align*}
      F(s, q) &:= \xi_1 (1-s) \left( p_2 +( 1 - p_2)q\right)^2 + \xi_2 s \left( 1 - (1-p_1)(1-p_2)(1-s)(1-q)^2\right).
    \end{align*}
    Let $s_L$ be the unique root in $[0, 1]$ satisfying $F(s_L, 0) = 0$ and let $s_H := \xi_1 / (\xi_1 - \xi_2)$.
    Then the user best response from routing to model 1 is
    \begin{align*}
      q^*(1, s) =
      \begin{cases}
        0 & \quad \text{ if~} s \leq s_L \\
        q^{\dagger}(s) & \quad \text{ if~} s_L < s < s_H \\
        1 & \quad \text{ if~} s \geq s_H \\
      \end{cases}
    \end{align*}
    where $q^{\dagger}(s)$ is the unique root in $[0, 1]$ to the quadratic equation $F(s, q^\dagger(s)) = 0$.
\end{enumerate}
\end{theorem}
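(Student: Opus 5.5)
The plan is to reduce the user's problem to the sign of a single function of $q$, namely the $F$ in the statement. First I will rewrite the utility compactly. Since $U_1 = VS_1 - L_1$ and $S_1, L_1$ share a denominator,
\begin{align*}
U_1(s,q) = \frac{\xi_1 + \alpha(q)\beta(q)\, s\,\xi_2}{1-\alpha(q)(1-s)} .
\end{align*}
I will work in the variable $u := 1-q$ and write $a := 1-p_1$, so that $\alpha = au$ and $\beta = 1/(1-(1-p_2)u)$. The key identity is $\frac{d}{du}(u\beta) = \beta^2$. Applying the quotient rule and multiplying through by the positive factor $1/(a\beta^2)$, the sign of $\partial U_1/\partial u$ equals the sign of
\begin{align*}
s\xi_2\bigl(1 - a(1-s)u\bigr) + \xi_1(1-s)\beta^{-2} + a(1-s)u\,s\xi_2\,\beta^{-1}.
\end{align*}
Using $\beta^{-1} = p_2+(1-p_2)q$ and collecting the $\xi_2$ terms gives $s\xi_2\bigl(1-a(1-p_2)(1-s)u^2\bigr)$, so this expression is exactly $F(s,q)$. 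Hence $\operatorname{sign}\,\partial U_1/\partial q = -\operatorname{sign} F(s,q)$. The whole theorem then becomes a statement about the sign pattern of $F$ on $q\in[0,1]$.

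\textbf{Cases 1 and 2.} Both coefficient brackets in $F$ are nonnegative: $(p_2+(1-p_2)q)^2>0$, and $1-a(1-p_2)(1-s)(1-q)^2>0$ because $a(1-p_2)<1$. So if $\xi_1,\xi_2\ge 0$ then $F\ge 0$, $U_1$ is nonincreasing in $q$, and $q^*=0$. Symmetrically, if $\xi_1,\xi_2\le0$ then $F\le0$ and $q^*=1$. Ties on measure-zero boundaries will be resolved by the stated convention.

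\textbf{Case 3} ($\xi_1<0<\xi_2$). Here $\partial^2F/\partial q^2 = 2\xi_1(1-s)(1-p_2)^2 - 2\xi_2 s a(1-p_2)(1-s) < 0$, so $F$ is concave in $q$. I want to show that $U_1$ has no interior maximum. I will argue that on $[0,1]$ the function $F$ can only change sign from negative to positive: at $q=1$ we have $F(s,1)=\xi_1(1-s)+\xi_2 s$, while near small $q$ the negative $\xi_1$ term dominates when $s$ is small. Concretely, I will check that at any root $F$ has positive $q$-slope, or directly exclude a $+\to-$ crossing using concavity together with the endpoint values. With that in hand, $U_1$ is quasi-convex in $q$, so the maximum is at an endpoint and it suffices to compare $U_1(s,0)$ with $U_1(s,1)=\xi_1$. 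A direct computation gives
\begin{align*}
U_1(s,0)-\xi_1 \;\propto\; a\bigl[\xi_1(1-s)+s\xi_2/p_2\bigr],
\end{align*}
which is positive iff $s>s_0$. This yields $q^*=\Ind\{s<s_0\}$. The quasi-convexity check is the delicate part of this case. If the single-crossing argument fails, my fallback is to view $U_1$ as a ratio of functions of $u$ and show it is monotone between endpoints via the linear-fractional structure.

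\textbf{Case 4} ($\xi_1>0>\xi_2$). Now $F$ is convex in $q$, and $F(s,1)=\xi_1(1-s)+\xi_2 s$ is $\le 0$ iff $s\ge s_H$. I need three facts.
\begin{enumerate}
\item $F(s,0)$ is strictly decreasing in $s$ on $[0,1]$, with $F(0,0)=\xi_1p_2^2>0$ and $F(1,0)=\xi_2<0$. This gives the unique root $s_L$. Monotonicity should follow because the $\xi_1$ term decreases in $s$ and $s(1-k(1-s))$ is increasing, with $k=a(1-p_2)<1$.
\item $s_L<s_H$. I will show this by proving $F(s,0)\le F(s,1)$ at $s=s_H$, i.e.\ $F(s_H,0)<0$.
\item For $s\in(s_L,s_H)$, $F(s,0)<0<F(s,1)$, so $F$ crosses zero in $q$ exactly once, from negative to positive. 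Hence $U_1$ rises and then falls, with interior maximizer $q^\dagger(s)$.
\end{enumerate}
Given these, the three regimes follow. For $s\le s_L$, $F(s,0)\ge0$, and I must show $F\ge0$ on all of $[0,1]$ (since $F(s,1)>0$ there), so $q^*=0$. For $s\ge s_H$, I must show $F\le 0$ throughout, so $q^*=1$.

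\textbf{Main obstacle.} The hardest step is controlling the sign changes of the quadratic $F(s,\cdot)$ on $[0,1]$ in Cases 3 and 4, i.e.\ ruling out a second root. I will first try to locate the vertex of the parabola outside $[0,1]$, or to show that $F$ is monotone in $q$ on $[0,1]$ by bounding $\partial F/\partial q$ at the endpoints. Convexity or concavity then gives monotonicity between those bounds.
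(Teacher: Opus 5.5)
Your reduction is correct, and it organizes the proof more uniformly than the paper's mix of envelope bounds $U_1^{\pm}$ and a separate first-order analysis. The identity $\frac{d}{du}(u\beta)=\beta^2$ does give $\operatorname{sign}\partial_qU_1=-\operatorname{sign}F(s,q)$ on all of $[0,1]$. So Cases 1--2 are immediate, and in Case 4 the regimes $s\ge s_H$ (convex $F$, nonpositive at both ends) and $s_L<s<s_H$ (convex $F$ with $F(s,0)<0<F(s,1)$) go through.

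The gap is in controlling $F$ on the interior of $[0,1]$. In Case 3 you have the crossing direction reversed. A root where $F(s,\cdot)$ passes from negative to positive is where $\partial_qU_1$ passes from positive to negative, i.e.\ an interior local \emph{maximum} of $U_1$. So ``$F$ only changes sign from negative to positive'' would make $U_1$ quasi-concave, not quasi-convex. Moreover, the $+\to-$ crossing you plan to exclude really occurs. One has $F(s,0)-p_2^2\bigl(\xi_1(1-s)+\xi_2s/p_2\bigr)=\xi_2s(1-p_2)\bigl(p_1+(1-p_1)s\bigr)$, so $F(s,0)>0$ for $s\ge s_0$. Meanwhile $F(s,1)=\xi_1(1-s)+\xi_2s<0$ for $s<-\xi_1/(\xi_2-\xi_1)$, and this bound exceeds $s_0$. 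On that nonempty range $U_1$ first decreases and then increases in $q$, so your monotonicity fallback fails as well. What must be excluded is a $-\to+$ crossing. Concavity plus endpoint signs cannot do it, since a concave $F$ negative at $q=0$ and $q=1$ may be positive in between. The same hole appears in Case 4 for $s\le s_L$: a convex $F$ with $F(s,0)\ge0$ and $F(s,1)>0$ could dip below zero and create an interior local maximum. Your ``I must show $F\ge0$'' comes with no mechanism.

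The missing idea is to tie the slope of $F$ at $q=0$ to its value there. With $G:=\xi_1p_2+(1-p_1)s\,\xi_2$ one has $\partial_qF(s,0)=2(1-s)(1-p_2)G$, and expanding gives
\begin{align*}
F(s,0)=p_2(1-s)\,G+\xi_2\,s\,\bigl(p_1+(1-p_1)s\bigr).
\end{align*}
\begin{itemize}
\item In Case 4 ($\xi_2<0$), $F(s,0)\ge0$ forces $G\ge0$. By convexity $F(s,\cdot)$ is then nondecreasing on $[0,1]$, hence nonnegative, and $q^*=0$ for $s\le s_L$.
\item In Case 3 ($\xi_2>0$), $F(s,0)\le0$ forces $G\le0$. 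By concavity $F(s,\cdot)\le0$ throughout, so $q^*=1$.
\item If instead $F(s,0)>0$ in Case 3, concavity makes $\{q: F(s,q)>0\}$ an initial segment of $[0,1]$. Then $U_1$ decreases and then increases, the maximum sits at an endpoint, and your comparison $U_1(s,0)-\xi_1=(1-p_1)\bigl(\xi_1(1-s)+\xi_2s/p_2\bigr)/\bigl(p_1+(1-p_1)s\bigr)$ finishes.
\end{itemize}
This is your ``vertex outside $[0,1]$'' idea, but it holds only conditionally on the sign of $F(s,0)$, and the identity is what delivers it.

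Alternatively, use the paper's routes:
\begin{itemize}
\item For Case 3, $U_1(s,q)\le U_1^+(\alpha)=(\xi_1+\xi_2\alpha s/p_2)/(1-\alpha(1-s))$. This bound is monotone in $\alpha$ with the sign of $\xi_1(1-s)+\xi_2s/p_2$ and is tight at $q\in\{0,1\}$.
\item For $s\le s_L$, the paper writes $U_1(s,q)-U_1(s,0)=-q(1-p_1)T(s,q)/\bigl(D(s,q)D(s,0)\bigr)$ with $D(s,q)=1-\alpha(q)(1-s)$, and shows $T(s,q)\ge T(s,0)=F(s,0)/p_2^2$.
\end{itemize}
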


Figure \ref{fig:user_optimal_policy_heatmap} visualizes the different regimes.
From Theorem \ref{thm:user_m1_regimeABCD} and Theorem \ref{thm:user_m2}, when the two models are undifferentiated in terms of being value- or latency-dominated, then the user response is static. If both models are value-dominated, the user should stay in the system regardless of the routing and cascade policy. If both models are latency-dominated, the user should abandon the task.

Provider routing and cascading policies affect user behavior only when the models offered by the provider are differentiated in value. When $\xi_1 < 0 < \xi_2$, $M_1$ provides negative net value and users should only accept routing to $M_1$ if the likelihood of cascading is high, i.e., $s > s_0$. Otherwise, users can abandon the task since the expected delay is too large, replicating the single-model logic of Theorem \ref{thm:user_m2}.
Finally, if $\xi_1 > 0 > \xi_2$, $M_1$ is value-dominated and $M_2$ is latency-dominated. Here, routing directly to $M_2$ should encourage users to abandon the task. If the provider first routes to $M_1$ and assigns a low $s < s_L$, then the user is incentivized to remain. Raising $s > s_H$ pushes users to abandon the task since it may get cascaded to a latency-dominated $M_2$. Most interestingly, the interior regime $(s_L, s_H)$ has a non-static $q^*(1, s) = q^\dagger(s)$, where the user is incentivized to abandon the task with probability. This is the only scenario where user behavior is stochastic and changes with slight perturbation of $s$. Although the provider-optimal policy may find $s \in (s_L, s_H)$, it may be also strategic to avoid this regime for a provider to ensure user behavior is predictable.

Theorem \ref{thm:user_m1_regimeABCD} assumes that users know $(i, s, p_1, p_2)$. In practice, the LLM interface may reveal the route $i$, but not $s$, $p_1$, or $p_2$. Here, beliefs drive the same threshold logic with respect to belief summaries; for example, users can act using $\EX_{\pi}[s]$ where $\pi(s)$ is a posterior estimate of $s$ observed from past routing. Over repeated tasks, this leads to an exploration versus exploitation problem.

\begin{figure}
\centering
\includegraphics[width=0.49\linewidth]{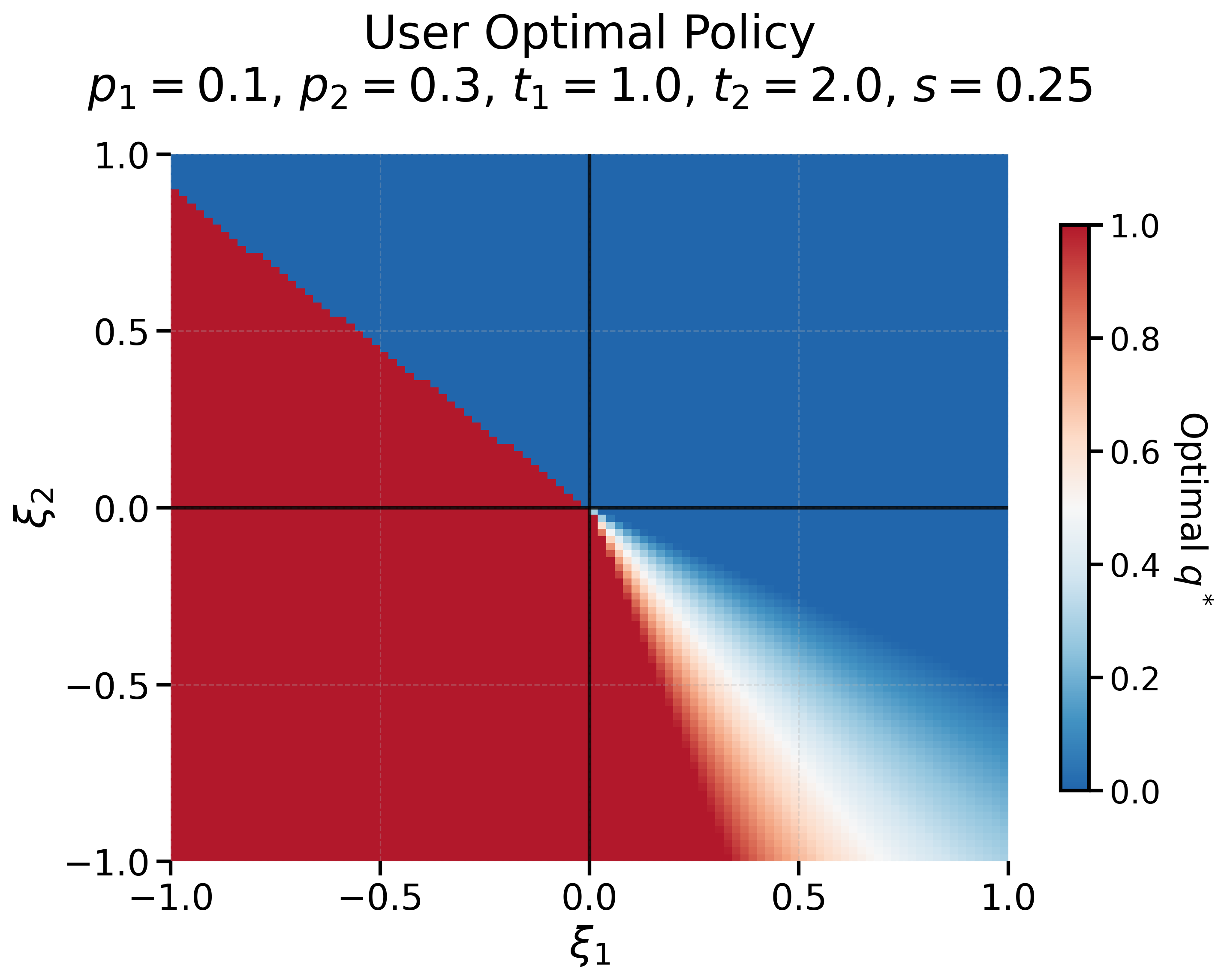}
\vspace{-4mm}
\caption{Heatmap of the user response when $i =1$ and $s = 0.25$. When both models are value-dominated (top right) or latency-dominated (bottom left), the user best response is static. When the models differ in their regime (top left and bottom right), the user response depends on $\xi_1, \xi_2$. Note that $q^* \in (0, 1)$ is in the interior only in certain regimes for $\xi_1 > 0 > \xi_2$. }
\label{fig:user_optimal_policy_heatmap}
\vspace{-4mm}
\end{figure}

\subsection{Provider-optimal Policy}
\label{sec:stackelberg_provider}

Given the user best response, the provider optimization problem becomes taking the maximum of a single-variable optimization problem and a scalar:
\begin{align*}
\min \left\{ \min_s J_1\left(s, q^*(1, s) \right) ,\; J_2\left( \Ind\{ \xi_2 < 0 \} \right)\right\}
\end{align*}
The structure of $J_1(s, q^*(1, s))$ depends on which models are value- or latency-dominated.
We first prove the two cases where the models share the same sign with respect to $\xi_i$, before considering the settings where the models are differentiated.

\begin{theorem}\label{thm:provider_optimal_regimeAB}
When both models share the same sign, the provider-optimal policy involves routing to one model without cascading:
\begin{enumerate}
  \item If $\xi_1, \xi_2 > 0$, then
    \begin{align*}
      (i^*, s^*) =
      \begin{cases}
        (1, 0) & \displaystyle \quad \text{ if~} \frac{c_1}{p_1} \leq \frac{c_2}{p_2} \\
        (2, 0) & \quad \text{ otherwise}
      \end{cases}
    \end{align*}

  \item If $\xi_1, \xi_2 < 0$, then, setting $s\in[0, 1]$ as a free variable,
    \begin{align*}
      (i^*, s^*) =
      \begin{cases}
        (1, s)  & \displaystyle\quad \text{ if~} P \leq \frac{c_2 - c_1}{p_2 - p_1} \\
        (2, 0)  & \quad \text{ otherwise}
      \end{cases}
    \end{align*}
\end{enumerate}
\end{theorem}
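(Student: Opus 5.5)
Throughout, the plan is to substitute the user best responses from Theorems \ref{thm:user_m2} and \ref{thm:user_m1_regimeABCD} into $J_i$, which leaves an explicit function of $s$ alone, and then minimise it over $s\in[0,1]$ together with the scalar $J_2$.

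\textbf{Case $\xi_1,\xi_2>0$.} Both theorems give $q^*=0$. Then $\alpha=1-p_1$, $\beta=1/p_2$, and the chain is absorbed into Success with probability one, so $S_1=S_2=1$ and there is no penalty term. Hence $J_2=c_2/p_2$. For $i=1$,
\begin{align*}
J_1(s,0)=\frac{c_1+(1-p_1)\,s\,c_2/p_2}{p_1+(1-p_1)s}.
\end{align*}
This is a linear-fractional function of $s$, so it is monotone on $[0,1]$. It can also be read as a weighted average of $c_1/p_1$ and $c_2/p_2$, with weights $p_1$ and $(1-p_1)s$; this follows by writing the numerator as $p_1\cdot(c_1/p_1)+(1-p_1)s\cdot(c_2/p_2)$. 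Consequently $J_1(s,0)$ always lies between $J_1(0,0)=c_1/p_1$ and $c_2/p_2$. The minimum over all routes is therefore $\min\{c_1/p_1,c_2/p_2\}$. It is attained at $(1,0)$ when $c_1/p_1\le c_2/p_2$, and at $(2,0)$ otherwise; in the latter case $s$ is irrelevant.

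\textbf{Case $\xi_1,\xi_2<0$.} Both theorems give $q^*=1$, so $\alpha=0$ and $\beta=1/p_2\cdot$ is replaced by $\beta=1$. The user leaves after the first failure, so cascading never occurs and $J_1$ does not depend on $s$. This is why $s$ is a free variable in the statement. The two costs are $J_1=c_1+P(1-p_1)$ and $J_2=c_2+P(1-p_2)$. Comparing them, $J_1\le J_2$ holds exactly when $P(p_2-p_1)\le c_2-c_1$, which gives the stated threshold.

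The only delicate point is the first case: one must confirm that no interior $s$ beats both endpoints. The weighted-average (mediant) observation settles this. I would state it explicitly rather than differentiate, noting that the sign of $\partial_s J_1$ equals the sign of $c_2/p_2-c_1/p_1$. Boundary cases with $\xi_i=0$ are excluded by the strict inequalities in the statement, so no tie-breaking is needed.
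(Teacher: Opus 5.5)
Your proposal is correct and follows the paper's proof. You substitute $q^*=0$ (respectively $q^*=1$) from Theorems~\ref{thm:user_m2} and~\ref{thm:user_m1_regimeABCD}, then compare $J_1(s,0)$ with $J_2=c_2/p_2$ (respectively the $s$-independent comparison of $c_1+P(1-p_1)$ with $c_2+P(1-p_2)$); the only difference is in case~1, where your weighted-average form of $J_1(s,0)$ gives monotonicity and the comparison with $J_2$ in one step, while the paper differentiates in $s$ and checks the endpoint $J_1(1,0)>c_2/p_2$ separately (the paper uses the same convex-combination idea in Lemma~\ref{lem:provider_optimal_regimeD_helper1}).
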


If the two models are undifferentiated, i.e., both value- or latency-dominated, then cascading from one model to the other adds costs and variance without any benefits, making single routing optimal.
When both models are value-dominated, the optimal policy is to route to the best model determined by the cost-of-pass \citep{mahmood2024pricing, erol2025cost}. Because users should not abandon tasks, the penalty for user abandonment is irrelevant and cascading cannot improve costs beyond routing to the appropriate model immediately.
When both models are latency-dominated, users are likely to abandon tasks if the model cannot complete them. Here, the routing policy depends on $P$ versus the incremental advantage in the cost-of-pass $(c_2-c_1)/(p_2-p_1)$. If the penalty of user abandonment is low, the optimal provider policy is to route to the cheaper standard model. Moreover, the choice of $s$ is irrelevant since users will abandon regardless. On the other hand, if $P$ is large, then the provider must route to the reasoning model to reduce the risk of incurring the penalty.

\begin{theorem}\label{thm:provider_optimal_regimeC}
Let $P_1 := (c_2 / p_2 -c_1)/(1-p_1)$ and let $P_2 := (c_1(1-s_0) + c_2 s_0/p_2)/(p_1 + (1-p_1)s_0)$.
If $\xi_1 < 0 < \xi_2$, then the provider-optimal policy breaks into four cases:
\begin{align*}
  (i^*, s^*) =
  \begin{cases}
    (1, 0)   & \displaystyle \quad \text{ if~} \frac{c_1}{p_1} > \frac{c_2}{p_2} \text{~and~} P < P_1 \\
    (2, 0)   & \displaystyle \quad \text{ if~} \frac{c_1}{p_1} > \frac{c_2}{p_2} \text{~and~} P > P_1 \\
    (1, 0)   & \displaystyle \quad \text{ if~} \frac{c_1}{p_1} < \frac{c_2}{p_2} \text{~and~} P < P_2 \\
    (1, s_0) & \displaystyle \quad \text{ if~} \frac{c_1}{p_1} < \frac{c_2}{p_2} \text{~and~} P > P_2
  \end{cases}
\end{align*}
\end{theorem}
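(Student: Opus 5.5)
The plan is to use the user best responses from Theorem~\ref{thm:user_m2} and Theorem~\ref{thm:user_m1_regimeABCD} (case 3) to reduce the provider problem to comparing a few closed-form numbers. Since $\xi_2>0$, Theorem~\ref{thm:user_m2} gives $q^*(2,s)=0$. Then $\beta(0)=1/p_2$, $S_2=1$, and $J_2 = C_2 = c_2/p_2$. For $i=1$, Theorem~\ref{thm:user_m1_regimeABCD} gives $q^*(1,s)=\Ind\{s<s_0\}$, so I split $s\in[0,1]$ into $[0,s_0)$ and $[s_0,1]$. At $s=s_0$ the indicator is $0$, so the user stays.

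On $[0,s_0)$ we have $q=1$, hence $\alpha(1)=0$. The closed forms then collapse to $S_1=p_1$ and $C_1=c_1$, so
\begin{align*}
J_1(s,1)=c_1+P(1-p_1),
\end{align*}
which is independent of $s$; I may take $s=0$ as the representative. On $[s_0,1]$ we have $q=0$, so $\alpha=1-p_1$, $\beta=1/p_2$, $S_1=1$, and
\begin{align*}
J_1(s,0)=g(s):=\frac{c_1+(1-p_1)s\,c_2/p_2}{p_1+(1-p_1)s}.
\end{align*}
The key step is that $g$ is linear-fractional in $s$, so it is monotone. The sign of $g'$ is the sign of
\begin{align*}
(1-p_1)\frac{c_2}{p_2}\,p_1-c_1(1-p_1)=p_1(1-p_1)\Bigl(\frac{c_2}{p_2}-\frac{c_1}{p_1}\Bigr).
\end{align*}
Note also that $g(1)=c_2/p_2=J_2$, so cascading with certainty is equivalent to routing directly to $M_2$.

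\textbf{Case $c_1/p_1>c_2/p_2$.} Here $g$ is decreasing, so its minimum on $[s_0,1]$ is $g(1)=J_2$. The provider therefore compares $c_1+P(1-p_1)$ (policy $(1,0)$) with $c_2/p_2$ (policy $(2,0)$, or equivalently $(1,1)$). Rearranging $c_1+P(1-p_1)<c_2/p_2$ gives exactly $P<P_1$, which yields the first two cases.

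\textbf{Case $c_1/p_1<c_2/p_2$.} Here $g$ is increasing, so its minimum on $[s_0,1]$ is $g(s_0)$. Because $g(s_0)\le g(1)=J_2$, routing to $M_2$ is weakly dominated. The comparison is then between $(1,0)$ with cost $c_1+P(1-p_1)$ and $(1,s_0)$ with cost $g(s_0)$. This gives a threshold on $P$, and I would simplify it using $s_0=-\xi_1/(\xi_2/p_2-\xi_1)$ to match the stated $P_2$.

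I expect this last algebraic identification to be the main obstacle. A direct rearrangement of $c_1+P(1-p_1)>g(s_0)$ gives the threshold $(g(s_0)-c_1)/(1-p_1)$, whereas the stated $P_2$ resembles $g(s_0)$ with $c_1$ weighted by $(1-s_0)$. I would first recheck the $(1-s)$ weighting on the $M_1$ retry cost in $C_1$ and $S_1$ under the cascading convention. I would then verify that the two expressions coincide, or otherwise correct the threshold. Ties at $s_0$ and at equality of the thresholds are resolved by the convention that indifferent users stay.
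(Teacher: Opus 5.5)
Your reduction is the paper's: you substitute $q^*(2,\cdot)=0$ and the threshold response $q^*(1,s)$. $J_1$ is then the constant $c_1+P(1-p_1)$ on $[0,s_0)$ and the linear-fractional $g$ on $[s_0,1]$. You use the sign of $g'$ to pick the relevant endpoint and compare against $J_2=c_2/p_2$. Two points need correcting.

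First, the identity $g(1)=c_2/p_2=J_2$ is false. In fact $g(1)=c_1+(1-p_1)c_2/p_2$, because escalating with certainty still pays $c_1$ for the first pass before paying the $M_2$ cost-of-pass with probability $1-p_1$. So $g(1)-c_2/p_2=p_1\bigl(c_1/p_1-c_2/p_2\bigr)$. Your case conclusions survive only because this quantity has the right sign in each case:
\begin{itemize}
\item When $c_1/p_1>c_2/p_2$, the best cascading value $g(1)$ is \emph{strictly} larger than $J_2$. Every $(1,s)$ with $s\ge s_0$ is therefore strictly dominated by $(2,0)$; this is exactly the check the paper performs. So $(1,1)$ is not an equivalent optimum, and saying it is would wrongly suggest non-uniqueness in the second case of the theorem.
\item When $c_1/p_1<c_2/p_2$, you get $g(s_0)<g(1)<c_2/p_2$, so $M_2$ is strictly (not merely weakly) dominated, as you wanted.
\end{itemize}

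Second, the step you flag as the main obstacle closes immediately. You need neither to revisit the closed forms of $C_1,S_1$ nor to substitute the formula for $s_0$, because for every $s$,
\[
g(s)-c_1=\frac{(1-p_1)\bigl(c_1(1-s)+s\,c_2/p_2\bigr)}{p_1+(1-p_1)s}.
\]
Hence $(g(s_0)-c_1)/(1-p_1)=\bigl(c_1(1-s_0)+s_0c_2/p_2\bigr)/\bigl(p_1+(1-p_1)s_0\bigr)=P_2$ exactly. So $c_1+P(1-p_1)<g(s_0)$ if and only if $P<P_2$, which is the paper's ``re-arranging'' step.

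With these two fixes your argument is complete and coincides with the paper's proof. Your tie-break at $s=s_0$ (the user stays) is what makes $(1,s_0)$ actually attained when $P>P_2$.
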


When $\xi_1<0<\xi_2$, users should remain if routed to $M_2$, but leave if routed to $M_1$ with limited chance of escalating to $M_2$.
However, if $c_1/p_1 < c_2/p_2$, then the provider should always route to $M_1$. Alternatively if $P< P_1$ for the threshold value $P_1$, then the provider should still route to $M_1$.
Interestingly, even though the reasoning model is preferable to users, the provider-optimal policy is to route to $M_1$, unless the cost-of-pass of this model is sufficiently high. This leads to misalignment between the user and the provider when their respective rankings of the models disagree.

\begin{figure}
\centering
\includegraphics[width=0.49\linewidth]{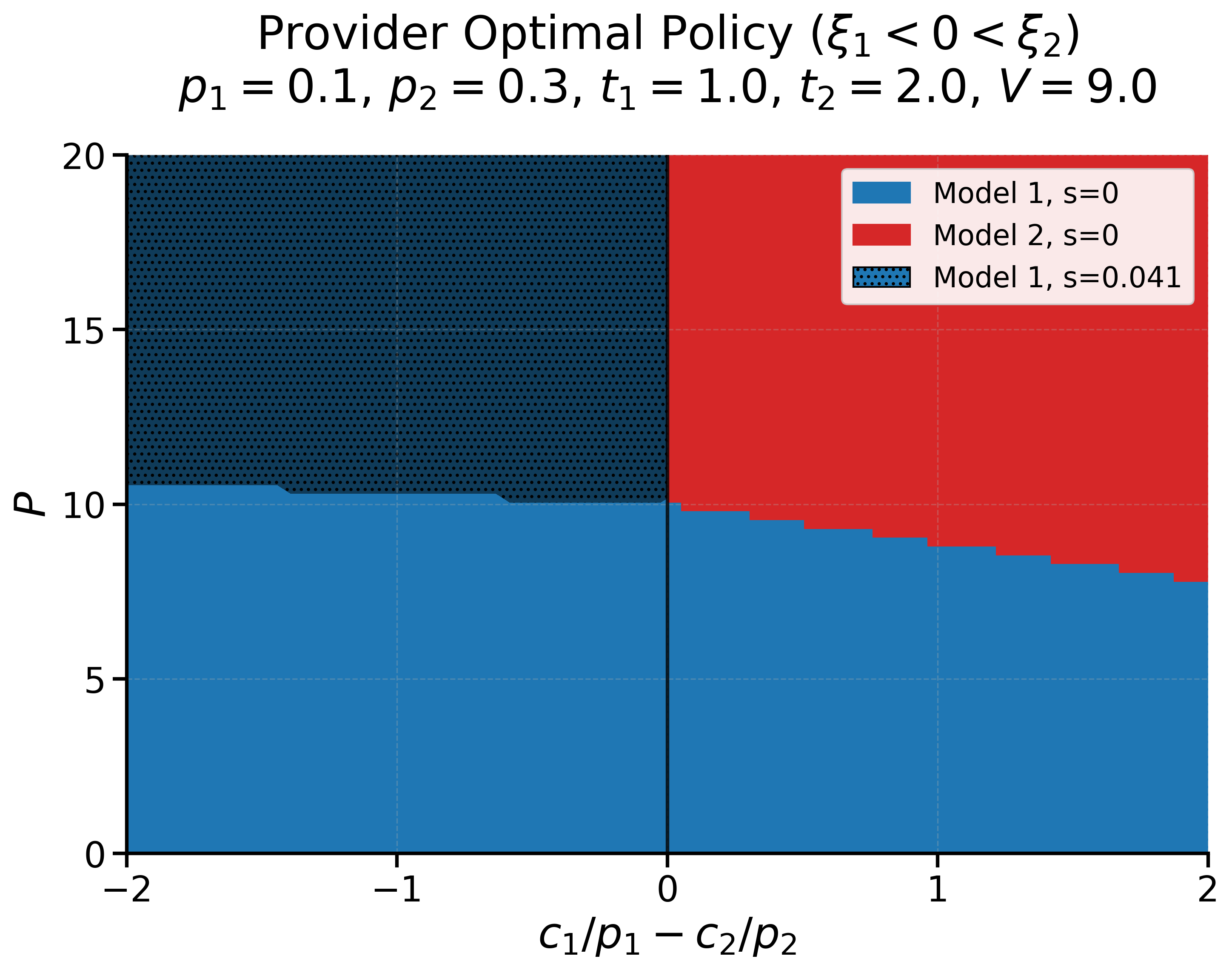}
\includegraphics[width=0.49\linewidth]{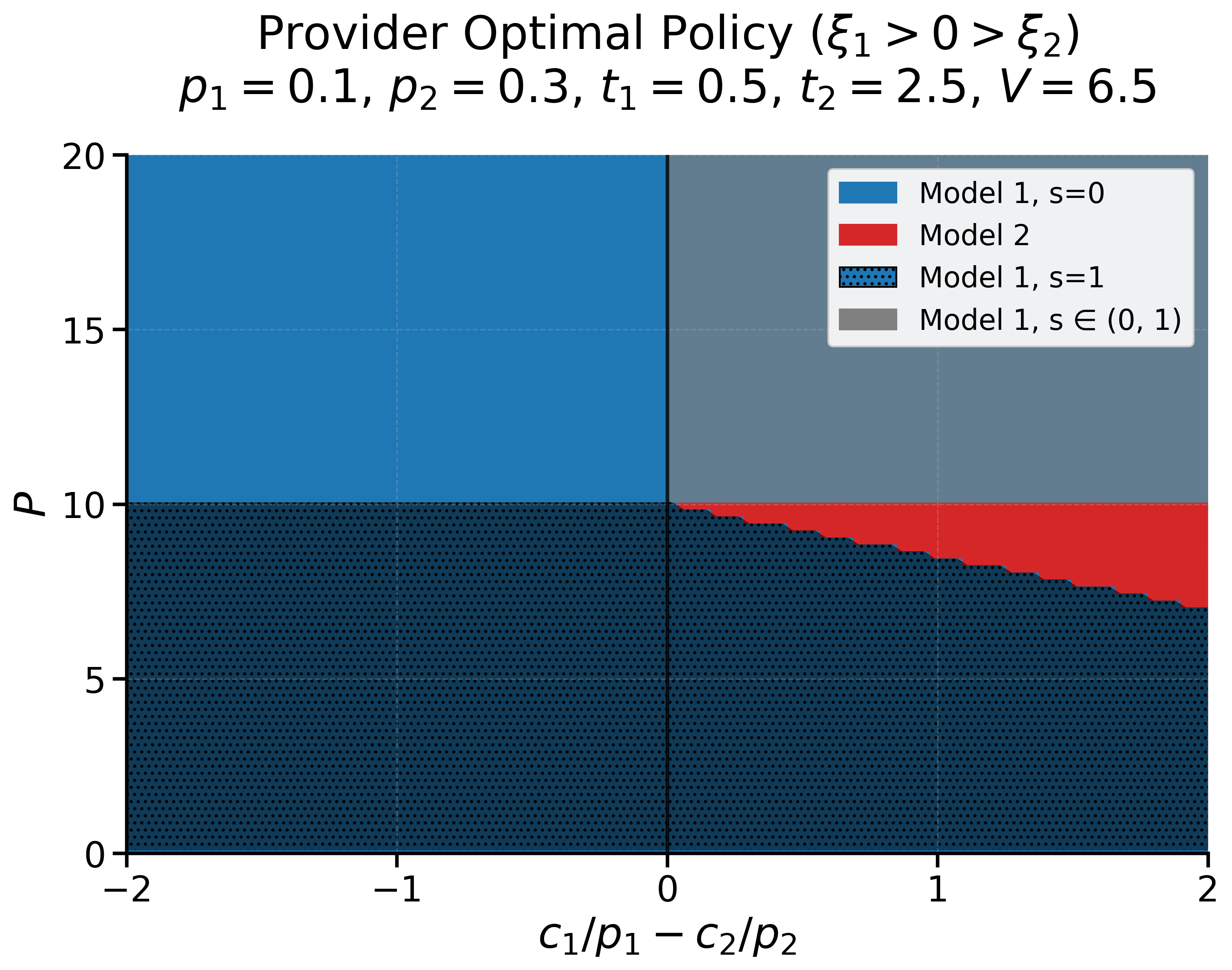}
\vspace{-4mm}
\caption{\emph{Left:} Heatmap of the provider-optimal policy for $\xi_1 < 0 < \xi_2$.
  \emph{Right:} Heatmap of the provider-optimal policy for $\xi_1 > 0 > \xi_2$.
  For both plots, we hold $c_1 = 1$ constant and sweep the difference in cost-of-pass $c_1/p_1 - c_2/p_2$ as well as $P$. For most regimes, the optimal policy is either to route immediately to $M_1$ or to $M_2$ without any cascading. There exist only some regimes for $\xi_1 > 0 > \xi_2$ where the optimal policy involves probabilistic $s\in(0,1)$.
}
\label{fig:provider_optimal_policy_heatmap}
\vspace{-4mm}
\end{figure}

Finally, we consider the case when $\xi_1>0>\xi_2$. Here, the user best response has three regions defined by thresholds $s_L$ and $s_H$. Given the monotonicity of the provider function, determining the optimal routing and cascade policy is equivalent to optimizing over the following five settings:
\begin{align*}
\min \left\{ J_1(0, 0), \, J_1(s_L, 0), \, \min_{s \in (s_L, s_H)} J_1(s, q^\dagger(s)),\, J_1(1, 1),\, J_2(0)\right\}
\end{align*}
Because the provider policy is not necessarily static in this regime, we focus on sufficient conditions under which the provider can employ static policies.
\begin{theorem}\label{thm:provider_optimal_regimeD}
If $\xi_1 > 0 > \xi_2$, the following statements are true:
\begin{enumerate}
  \item If $c_1/p_1 < \min\{ P, c_2/p_2\}$, the provider-optimal policy is to route to model 1 only $(i^*, s^*) = (1, 0)$.

  \item If $P < \min \{ c_1/p_1 , (c_2 - c_1)/(p_2 - p_1) \}$, the provider-optimal policy is to route to model 1 and cascade on failure $(i^*, s^*) = (1, s)$ where $s \in [s_H, 1]$ is a free variable.

  \item If $(c_2 - c_1)/(p_2 - p_1) < P < c_2/p_2 < c_1/p_1$, the provider-optimal policy is to route to model 2 $(i^*, s^*) = (2, 0)$.
\end{enumerate}
\end{theorem}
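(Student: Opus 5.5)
The proof rests on the five-candidate reduction stated just before the theorem. We compare $J_1(0,0)$, $J_1(s_L,0)$, the interior branch $\min_{s\in(s_L,s_H)}J_1(s,q^\dagger(s))$, $J_1(1,1)$ (which equals $J_1(s,1)$ for every $s\ge s_H$), and $J_2(0)$. Since $\xi_2<0$, Theorem~\ref{thm:user_m2} gives $q^*(2,\cdot)=1$. The candidate $J_2(0)$ should therefore be read as routing to $M_2$ with immediate abandonment after a failure. Its value is $J_2(1)=c_2+P(1-p_2)$.

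The first step is to compute each candidate in closed form from $C_i,S_i$:
\begin{align*}
J_1(0,0)&=\frac{c_1}{p_1}, \qquad J_1(s,1)=c_1+P(1-p_1), \qquad J_2(1)=c_2+P(1-p_2),\\
J_1(s,0)&=\frac{c_1+(1-p_1)\,s\,c_2/p_2}{p_1+(1-p_1)s}.
\end{align*}
The last expression uses $q=0$, so $\beta=1/p_2$, $S_1=1$, and no penalty is incurred. It is a linear-fractional function of $s$. It is monotone, increasing iff $c_2/p_2>c_1/p_1$, and it interpolates between $c_1/p_1$ and $c_2/p_2$. Hence $J_1(s_L,0)\ge\min\{c_1/p_1,c_2/p_2\}$, with $J_1(0,0)\le J_1(s_L,0)$ whenever $c_1/p_1<c_2/p_2$.

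The three pairwise comparisons among the explicit candidates are elementary:
\begin{itemize}
\item $c_1/p_1<c_1+P(1-p_1)$ holds iff $c_1/p_1<P$.
\item $c_1+P(1-p_1)<c_2+P(1-p_2)$ holds iff $P<(c_2-c_1)/(p_2-p_1)$.
\item $c_2+P(1-p_2)<c_1/p_1$ holds in a regime involving $c_2/p_2$. Note that $J_2(1)<c_2/p_2$ iff $P<c_2/p_2$.
\end{itemize}

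The main obstacle is the interior branch. The plan is to show that for every $(s,q)\in[0,1]^2$, $J_1(s,q)$ is a weighted average of the "stay" costs and the "abandon" costs, so it cannot beat the best of the extreme candidates. Concretely, write $J_1(s,q)=C_1+P(1-S_1)$ using the same denominator $D=1-\alpha(1-s)$. Conditional on $M_1$ failing, the process splits into three events with probabilities proportional to $q$, $(1-q)s$ and $(1-q)(1-s)$. The resulting cost is a convex combination of three quantities: $P$ (abandon), the $M_2$-continuation cost $\beta(q)c_2+P(1-\beta(q)p_2)$, and the restart cost $J_1$ itself. The $M_2$-continuation cost is in turn a convex combination of $c_2/p_2$ and $c_2+P(1-p_2)$.

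Solving the renewal equation gives $J_1(s,q)$ as a mixture of $c_1/p_1$-type and penalty-type terms. From this we conclude that $J_1(s,q)\ge\min\{c_1/p_1,\;c_1+P(1-p_1),\;\text{mixture involving }c_2/p_2\text{ and }J_2(1)\}$. Under each hypothesis that minimum is attained at the claimed policy. If the algebra resists, the fallback is to differentiate $J_1$ in $q$ at fixed $s$ and show it is monotone (its sign is governed by $P$ versus the per-success costs), which pushes the interior optimum to $q\in\{0,1\}$.

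With the interior branch dominated, each case follows by checking the inequalities:
\begin{enumerate}
\item Case 1: $c_1/p_1<c_2/p_2$ makes $J_1(0,0)$ beat $J_1(s_L,0)$, and $c_1/p_1<P$ makes it beat $J_1(1,1)$ and also $J_2(1)$. For the last comparison, $c_2+P(1-p_2)\ge\min\{c_2/p_2,P\}$, since $J_2(1)$ is a convex combination of $c_2/p_2$ and $P$ with weights $p_2$ and $1-p_2$.
\item Case 2: $P<c_1/p_1$ gives $J_1(1,1)<J_1(0,0)$. $P<(c_2-c_1)/(p_2-p_1)$ gives $J_1(1,1)<J_2(1)$. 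Finally, $J_1(1,1)=p_1(c_1/p_1)+(1-p_1)P<\min\{c_1/p_1,c_2/p_2\}$ when $P$ is below both, which handles $J_1(s_L,0)$; the bound $P<c_2/p_2$ would itself be derived from the hypothesis via the convex-combination identity.
\item Case 3: the hypotheses give $J_2(1)<J_1(1,1)$ and $J_2(1)<c_2/p_2\le J_1(s_L,0)\le J_1(0,0)$, where the monotonicity of $J_1(s,0)$ is used because it is now decreasing.
\end{enumerate}
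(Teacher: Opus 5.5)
Your argument is correct in substance, but at the crux (the interior branch) it takes a different route from the paper. The paper never bounds $J_1$ uniformly. It handles $J_1(\cdot,0)$ on $[0,s_L]$ through the sign of its $s$-derivative. It handles the interior branch through case-specific difference identities such as $J_1(s,q)-J_1(1,1)=(R-P)(S_1-p_1)$ with $R=(C_1-c_1)/(S_1-p_1)$, using Lemma~\ref{lem:provider_optimal_regimeD_helper1} ($R$ is a mediant of $c_1/p_1$ and $c_2/p_2$, and $S_1\ge p_1$).

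Your first-step decomposition, once actually solved, reads
\[
J_1(s,q)=\frac{p_1\cdot\frac{c_1}{p_1}+(1-p_1)q\,P+(1-p_1)(1-q)s\,J_2(q)}{p_1+(1-p_1)q+(1-p_1)(1-q)s},\qquad J_2(q)=\frac{c_2+(1-p_2)q\,P}{p_2+(1-p_2)q}.
\]
The first two numerator terms average to a value between $c_1/p_1$ and $c_1+P(1-p_1)$, because the weight on $P$ relative to $c_1/p_1$ is at most $(1-p_1):p_1$. Hence $J_1(s,q)\ge\min\{c_1/p_1,\,c_1+P(1-p_1),\,c_2/p_2,\,c_2+P(1-p_2)\}$ for \emph{all} $(s,q)\in[0,1]^2$. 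You should write this formula out; the ``if the algebra resists'' fallback is then unnecessary.

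The payoff is one bound, independent of the user's response, that covers $[0,s_L]$, $(s_L,s_H)$ and $[s_H,1]$ at once. From Theorem~\ref{thm:user_m1_regimeABCD} you need only the boundary facts $q^*(1,0)=0$ and $q^*(1,s)=1$ for $s\ge s_H$. You do not need $q^\dagger$, $s_L$, or the monotonicity of $J_1(\cdot,0)$. Each case then reduces to ranking four numbers, plus checking that the infeasible value $c_2/p_2$ (infeasible since $q^*(2,\cdot)=1$) is never the strict minimum. The paper's per-case differences tie each comparison to a cost-per-success interpretation and reuse a lemma needed later for throttling. However, they require picking the correct side of $\min/\max\{c_1/p_1,c_2/p_2\}$ in each case, which your route avoids.

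One step needs repair. In Case 2, the claim ``$p_1(c_1/p_1)+(1-p_1)P<c_2/p_2$ when $P$ is below both'' is false as stated when $c_2/p_2<c_1/p_1$. For example, take $p_1=c_1=0.8$, $p_2=0.99$, $c_2=0.85$, $P=0.85$: the left side is $0.97>c_2/p_2\approx 0.859$. Use your own bullets instead. The hypothesis $P<(c_2-c_1)/(p_2-p_1)$ gives $J_1(1,1)<J_2(1)$, and $P<c_2/p_2$ gives $J_2(1)<c_2/p_2$.
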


Figure \ref{fig:provider_optimal_policy_heatmap} visualizes the two settings where $\xi_1 < 0 < \xi_2$ and $\xi_1 > 0 > \xi_2$, respectively. For most settings, a static policy, i.e., $(i^*, s^*) \in \{ (1, 0), (1, 1), (2, 0) \}$, is optimal. Moreover in both cases, the optimal policy is almost always to route to $M_1$ first. This underscores the value of routing to a weaker model initially, when the models are differentiated. Furthermore, the role of cascading differs depending on which of the two models is value-dominated. If $M_1$ is the only value-dominated model, then cascading is effective specifically when the penalty is sufficiently low with respect to $c_1/p_1$; otherwise, the provider is incentivized to remain at the standard model. If $M_2$ is value-dominated, then aggressive cascading via $s=1$ is useful only when $P$ is sufficiently small.

Finally, we note that although we fix $V$, $p_1$, and $p_2$, the results naturally extend to a heterogenous setting with task-dependent parameters. Given a prompt $x$, a routing algorithm may first estimate the success probability $p_i(x)$ and user value $V(x)$ to better estimate the per-task user value $\xi_i(x) = V(x) p_i(x) - t_i$ and determine a personalized optimal route.

\section{When are provider and user misaligned?}
\label{sec:misalignment}

In this section, we discuss the implications of the equilibrium between user and provider to understand when the provider routing decisions act in the best interest of the user. We first define a misalignment gap that characterizes  when the provider's cost-minimizing routing strategy is sub-optimal for the user utility function.
We then study the conditions when misalignment exists.

Consider the cost-optimal policy for a provider $(i^*, s^*)$. The misalignment gap is the difference between a user utility-optimal routing policy versus the provider's true policy
\begin{align*}
\Delta_U := \max_{i, s} U_i(s, q^*(i, s)) - U_{i^*}(s^*, q^*(i^*, s^*))
\end{align*}
If $\Delta_U = 0$, then the provider-optimal policy is also optimal for the user, meaning that the provider's actions are in the user's best interests. However, if $\Delta_U > 0$, then there is a utility gap for the user incurred by the provider minimizing service costs.
Depending on the costs and penalties, a misalignment gap can exist for every possible configuration of $\xi_1, \xi_2$.

\begin{proposition}\label{propn:misalignments}
The following statements are true:
~
\begin{enumerate}
  \item If $\xi_1, \xi_2 > 0$, then $\Delta_U = 0$ if and only if $\sign(c_1/p_1 - c_2/p_2) = \sign(\xi_2/p_2 - \xi_1/p_1)$.
  \item If $\xi_1, \xi_2 < 0$, then $\Delta_U = 0$ if and only if $\sign(\xi_2 - \xi_1) = \sign(P - (c_2 - c_1)/(p_2 - p_1))$.
  \item If $\sign(\xi_1) \neq \sign(\xi_2)$, then $\Delta_U = 0$ if and only if $\xi_{i^*} > 0$ and $s^*=0$.
\end{enumerate}
\end{proposition}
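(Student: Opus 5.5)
The plan is to compute, in each sign regime, the user-preferred policy $\argmax_{i,s} U_i(s, q^*(i,s))$ in closed form using Theorems \ref{thm:user_m2} and \ref{thm:user_m1_regimeABCD}, and then compare it with the provider-optimal policy given by Theorems \ref{thm:provider_optimal_regimeAB}--\ref{thm:provider_optimal_regimeD}. Throughout, I will rewrite utilities as expected sums of per-pass net values. By the Bernoulli structure, each pass at $M_i$ contributes $\xi_i = Vp_i - t_i$ in expectation, so staying at $M_i$ forever with $q=0$ gives $\xi_i/p_i$. Abandoning immediately after a failure adds nothing further.

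\textbf{Case 1 ($\xi_1,\xi_2>0$).} By Theorems \ref{thm:user_m2} and \ref{thm:user_m1_regimeABCD}, $q^*\equiv 0$. Then $U_2 = \xi_2/p_2$, and
\[
U_1(s,0) = \frac{\xi_1 + (1-p_1)s\,\xi_2/p_2}{1-(1-p_1)(1-s)}.
\]
This is linear-fractional in $s$, hence monotone. Its endpoint values are $U_1(0,0)=\xi_1/p_1$ and $U_1(1,0) = p_1(\xi_1/p_1) + (1-p_1)(\xi_2/p_2)$, a strict convex combination of the two single-model values. So the user's maximum is $\max\{\xi_1/p_1,\xi_2/p_2\}$, attained by routing with no cascade to the model with the larger $\xi_i/p_i$. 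Theorem \ref{thm:provider_optimal_regimeAB} routes without cascade to the model with the smaller $c_i/p_i$. Hence $\Delta_U=0$ exactly when the two rankings agree, which is the sign condition in the statement. I would handle ties separately as a boundary convention.

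\textbf{Case 2 ($\xi_1,\xi_2<0$).} Here $q^*\equiv 1$, so $U_1(s,1)=\xi_1$ for every $s$ and $U_2(1)=\xi_2$. The user therefore prefers $M_2$ iff $\xi_2>\xi_1$. Theorem \ref{thm:provider_optimal_regimeAB} picks $M_2$ iff $P>(c_2-c_1)/(p_2-p_1)$. Matching the two conditions gives the stated sign equivalence.

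\textbf{Case 3 (mixed signs).} Let $j$ denote the index with $\xi_j>0$. Routing to $M_j$ with $s=0$ induces $q^*=0$ and gives utility $\xi_j/p_j$. I would show that every other policy gives strictly less.

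I would frame this as a dynamic-programming bound. The user's utility is the expected sum of per-pass contributions. The best continuation value available from any state is at most $\xi_j/p_j$, since each pass at $M_j$ earns $\xi_j$ and each pass at the other model earns a negative amount. Abandoning forfeits a positive continuation value. It follows that any policy placing positive probability on a pass at the negative model, or on abandonment, loses strictly against $\xi_j/p_j$.

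\begin{itemize}
\item For $\xi_1<0<\xi_2$: routing to $M_1$ forces at least one pass at $M_1$, regardless of $s$ and $q^*$. So the user's unique optimum is $(2,0)$.
\item For $\xi_1>0>\xi_2$: any $s>0$ with $q^*<1$ yields positive probability of an $M_2$ pass. Any $q^*>0$ yields positive probability of abandonment. Routing to $M_2$ gives $U_2(1)=\xi_2<0$. So the user's unique optimum is $(1,0)$.
\end{itemize}

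Hence $\Delta_U=0$ iff the provider's choice from Theorems \ref{thm:provider_optimal_regimeC}--\ref{thm:provider_optimal_regimeD} equals $(j,0)$, that is, iff $\xi_{i^*}>0$ and $s^*=0$.

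The main obstacle I expect is the interior regime $s\in(s_L,s_H)$ with $q^\dagger(s)\in(0,1)$, where no closed form is convenient. The DP bound above is chosen precisely to avoid solving the quadratic $F(s,q)=0$. What needs care is strictness: I must check that for $s>0$ the user either stays (creating an $M_2$ pass with positive probability) or abandons with positive probability. Either way the realized utility is strictly below $\xi_1/p_1$.
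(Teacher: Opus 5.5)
Your proposal is correct. Cases 1 and 2 follow the paper essentially verbatim. With $q^*\equiv 0$ (resp.\ $q^*\equiv 1$) you compare $U_1(s,0)$ with $U_2(0)=\xi_2/p_2$ (resp.\ $\xi_1$ with $\xi_2$), and you match the user's ranking against the provider rule of Theorem~\ref{thm:provider_optimal_regimeAB}. Your observation that $U_1(1,0)=p_1(\xi_1/p_1)+(1-p_1)(\xi_2/p_2)$ is a strict convex combination is a tidier reading of the paper's difference $U_1(s,0)-U_2(0)$.

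Case 3 is where you take a different route. The paper uses the user best response explicitly. For $\xi_1<0<\xi_2$ it substitutes the $s_0$-threshold response from Theorem~\ref{thm:user_m1_regimeABCD} and compares $U_1(s,1)=\xi_1$ and $U_1(s,0)$ against $U_2(0)$. For $\xi_1>0>\xi_2$ it sidesteps the $q^\dagger$ regime with the envelope of Lemma~\ref{lem:envelope_user}, $U_1(s,q)\le U_1^-(\alpha)\le \xi_1/p_1$.

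You use per-pass accounting instead. Made precise through the closed forms of $S_1$ and $L_1$, it reads $U_1(s,q)=\xi_1N_1+\xi_2N_2$, where:
\begin{itemize}
\item $N_1=1/(1-\alpha(q)(1-s))\in[1,1/p_1]$, with $N_1=1/p_1$ iff $q=s=0$;
\item $N_2=\alpha(q)\beta(q)s\,N_1\le 1/p_2$.
\end{itemize}
This bound is uniform in $q$, so you never need $q^*(1,s)$ on the suboptimal routes. It also gives strict inequality for every route other than $(j,0)$. The ``only if'' direction needs that strictness, and the paper's envelope chain in the $\xi_1>0>\xi_2$ branch states it only as a weak inequality. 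The paper's route reuses machinery already built for Theorem~\ref{thm:user_m1_regimeABCD}. Yours is more elementary and does not depend on the details of the best-response characterization.

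One phrasing correction. ``Abandoning forfeits a positive continuation value'' is not literally true when $\xi_2<0$ and $s$ is large: the policy's own continuation can then be negative, which is exactly why the user abandons for $s\ge s_H$. What you actually use, and what holds, is that any abandonment or cascade forces $N_1<1/p_1$. The loss is therefore measured against the benchmark $\xi_1/p_1$, not against the policy's own continuation.
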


When $\xi_1,\xi_2>0$, provider routing collapses to a static cost-of-pass rule.
Here, alignment requires the provider and user rankings to match; otherwise the provider selects the cheaper model while the user prefers the more successful one.
When $\xi_1, \xi_2 < 0$, the user always abandons, meaning that alignment requires the provider to select the same ``less-bad'' model from the user's perspective. This only occurs if the provider's switch regime agrees with the user's ranking.
Finally if the models are differentiated, any likelihood of cascading exposes the user to latency. Figure \ref{fig:insights} (Left, Middle) highlight this regime for the same parameters as in Figure \ref{fig:provider_optimal_policy_heatmap}, showing the misalignment gap.

\begin{figure}
\centering
\includegraphics[width=0.32\linewidth]{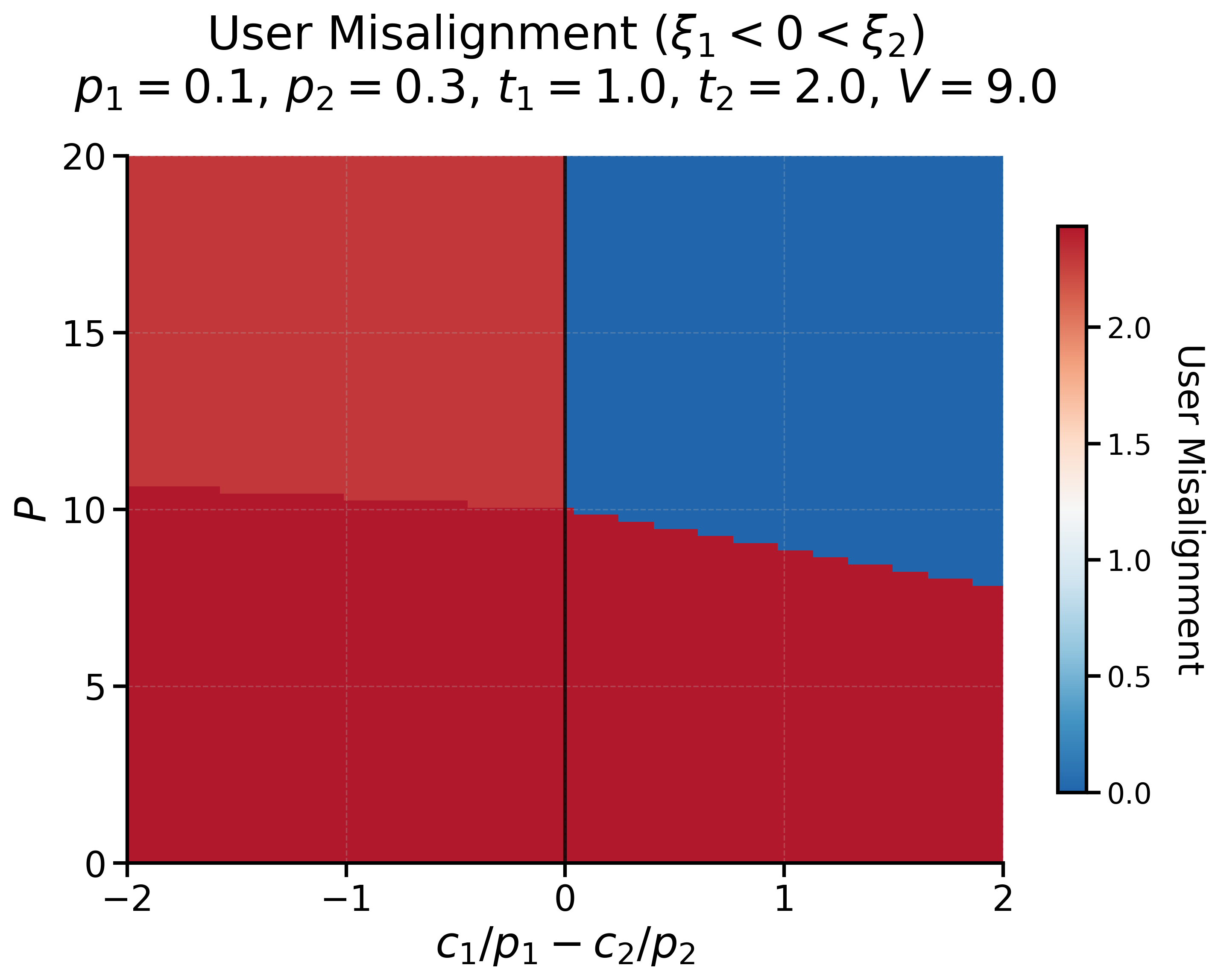}
\includegraphics[width=0.32\linewidth]{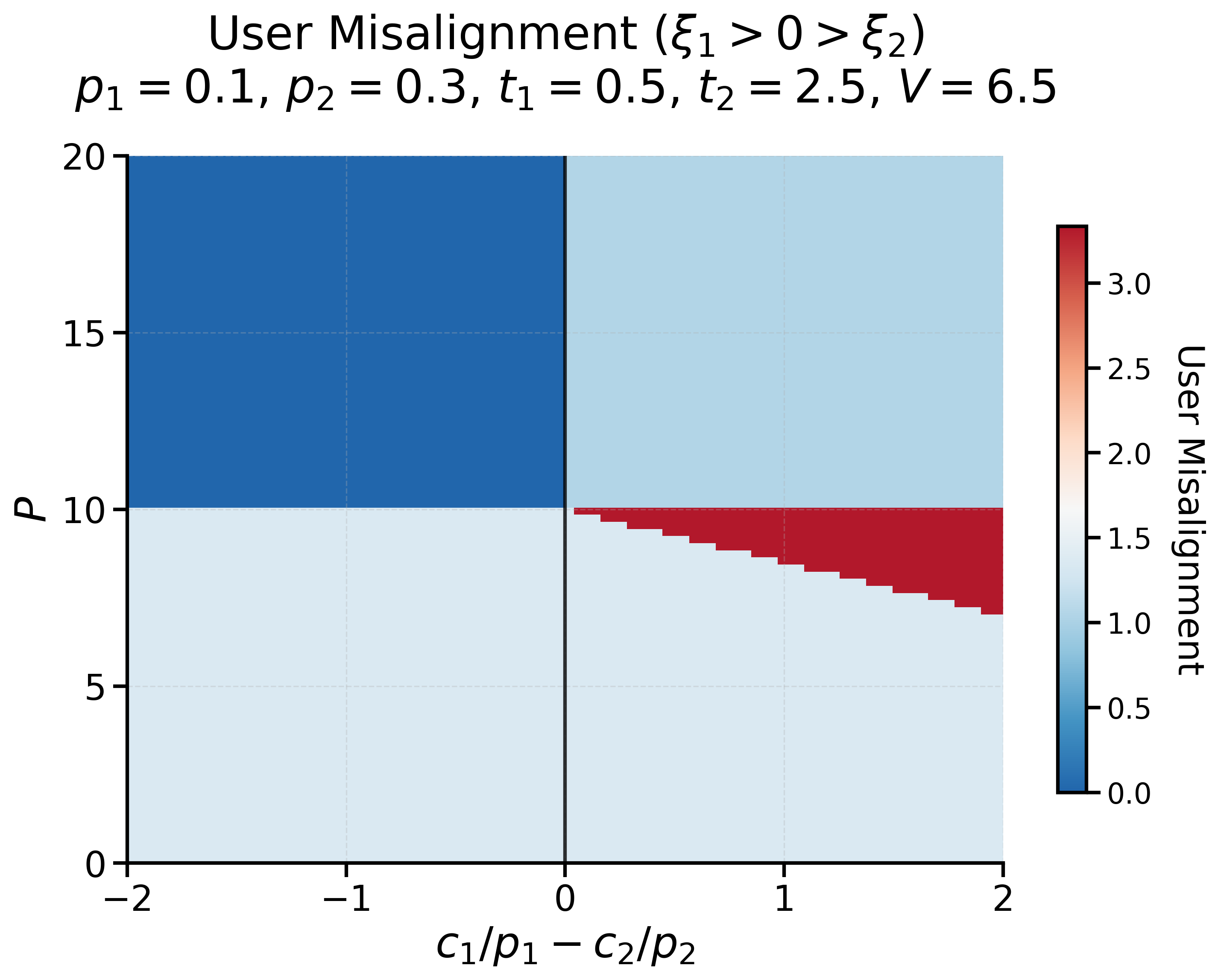}
\includegraphics[width=0.32\linewidth]{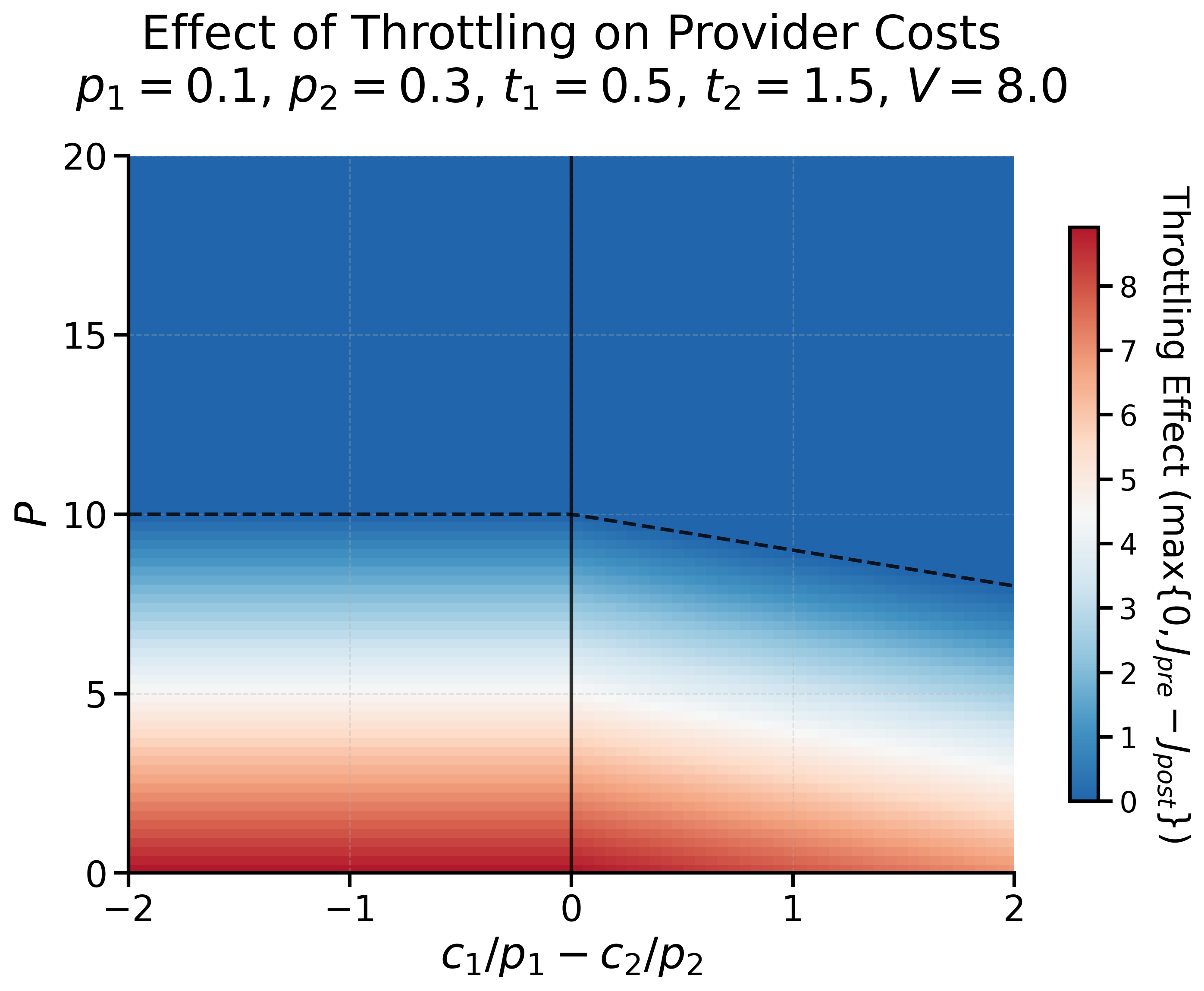}
\vspace{-4mm}
\caption{
  \emph{Left:} Heatmap of the user misalignment gap for $\xi_1 < 0 < \xi_2$.
  \emph{Middle:} Heatmap of the user misalignment gap for $\xi_1 > 0 > \xi_2$.
  \emph{Right:} Heatmap of the effect of throttling on provider costs; the dashed line is the line $P=\min\{c_1/p_1, c_2/p_2\}$.
  For all plots, we hold $c_1 = 1$ constant and sweep the difference in cost-of-pass $c_1/p_1 - c_2/p_2$ as well as $P$. 
}
\label{fig:insights}
\vspace{-4mm}
\end{figure}

\section{The risk of throttling latency}
\label{sec:throttling}

Finally, we study an extreme case of misalignment where the provider may seek to inflate the latency of their models to encourage users to abandon tasks more often, thereby reducing service costs from repeated prompts. Specifically, consider a provider forced to a fixed subscription price due to factors such as market competition, meaning they cannot directly re-price in response to higher service costs. Instead, throttling latency functions as a hidden price increase that implicitly reduces usage, and consequently, provider costs. Here, user abandonment causes a penalty $P$, but if it is sufficiently small, this loss may offset the higher service costs.

Suppose the models have latencies $t_i < V p_i$ for $i \in \{1,2\}$. Let $q^*(i, s; t_1, t_2)$ be the user response for these models and let $J^*_{pre} := \min_{(i,s)} \{J_i(s, q^*(i, s; t_1, t_2))\}$ be the provider-optimal cost. Given two \emph{throttled} latencies $\hat{t}_i > t_i$ for $ i \in \{1, 2\}$, let $J_{post}^* := \min_{(i, s)} \{ J_i(s, q^*(i, s; \hat{t}_1, \hat{t}_2) \}$ as the provider-optimal cost after throttling. Note that $\hat{t}_i$ are now variables that the provider can tune. If $J_{post}^* \leq J_{pre}^*$, then the provider benefits from throttling, even though it reduces the user's utility.
\begin{proposition}
  \label{propn:throttling_models}
  If $P \leq \min\{ c_1/p_1 , c_2/p_2 \}$, then setting $\hat{t}_i > V p_i$ incurs $J_{post}^* \leq J_{pre}^*$.
\end{proposition}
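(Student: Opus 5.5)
The plan is to compute both sides of the inequality explicitly. Before throttling we use Theorem~\ref{thm:provider_optimal_regimeAB}, case~1. After throttling we use the user best responses in Theorem~\ref{thm:user_m2} and Theorem~\ref{thm:user_m1_regimeABCD}, case~2. Then we compare the two values using the hypothesis $P \leq \min\{c_1/p_1, c_2/p_2\}$.

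\textbf{Step 1 (pre-throttling cost).} By assumption $t_i < Vp_i$, so $\xi_1, \xi_2 > 0$. By Theorem~\ref{thm:user_m2} and Theorem~\ref{thm:user_m1_regimeABCD}, case~1, the user never abandons ($q^* = 0$) under any routing policy. Theorem~\ref{thm:provider_optimal_regimeAB}, case~1, says the optimum is $(1,0)$ or $(2,0)$, whichever has the smaller cost-of-pass. With $q = 0$ we have $\alpha(0) = 1-p_1$ and $\beta(0) = 1/p_2$, so $S_i = 1$ and the penalty term vanishes. This gives $J_1(0,0) = c_1/p_1$ and $J_2(0) = c_2/p_2$. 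Hence
\begin{align*}
J^*_{pre} = \min\{ c_1/p_1,\ c_2/p_2 \}.
\end{align*}
This is the step I expect to need the most care: we must be sure that no cascade $s \in (0,1)$ beats the better pure route. That is exactly the content of Theorem~\ref{thm:provider_optimal_regimeAB}, case~1, which we invoke rather than re-derive.

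\textbf{Step 2 (post-throttling cost).} Choose $\hat t_i > Vp_i$, so that $\hat\xi_1, \hat\xi_2 < 0$. By Theorem~\ref{thm:user_m2}, $q^*(2,s) = 1$. By Theorem~\ref{thm:user_m1_regimeABCD}, case~2, $q^*(1,s) = 1$ for all $s$.

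With $q = 1$ we have $\alpha(1) = 0$ and $\beta(1) = 1$. So only a single pass occurs, and
\begin{align*}
J_1(s,1) = c_1 + P(1-p_1), \qquad J_2(1) = c_2 + P(1-p_2),
\end{align*}
independently of $s$. This is consistent with Theorem~\ref{thm:provider_optimal_regimeAB}, case~2. Therefore
\begin{align*}
J^*_{post} = \min\{ c_1 + P(1-p_1),\ c_2 + P(1-p_2) \}.
\end{align*}

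\textbf{Step 3 (comparison).} Let $j$ be the index attaining $\min\{c_1/p_1, c_2/p_2\}$. The hypothesis gives $P \leq c_j/p_j$, so
\begin{align*}
J^*_{post} \leq c_j + P(1-p_j) \leq c_j + \frac{c_j}{p_j}(1-p_j) = \frac{c_j}{p_j} = J^*_{pre}.
\end{align*}
Equality holds only when $P = c_j/p_j$. This completes the argument.
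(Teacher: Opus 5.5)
Your proof is correct under the literal reading of the statement, where both latencies are pushed above $Vp_i$. It reaches the result by a more direct route than the paper.

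\textbf{Your route.} You compute both equilibrium costs in closed form from Theorem~\ref{thm:provider_optimal_regimeAB} (cases 1 and 2). You then compare them with the single estimate $c_j + P(1-p_j) \le c_j/p_j$. As a bonus, this gives the exact throttling gain $\min_i\{c_i/p_i\} - \min_i\{c_i + P(1-p_i)\}$ quoted after the proposition, and strict improvement whenever $P < \min_i c_i/p_i$.

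\textbf{The paper's route.} The paper never computes $J^*_{pre}$. It first proves two policy-independent inequalities, using Lemma~\ref{lem:provider_optimal_regimeD_helper1}: under $P \le \min\{c_1/p_1, c_2/p_2\}$, for every $s, q$,
\begin{align*}
J_1(s,q) \ge J_1(0,1) = c_1 + P(1-p_1), \qquad J_2(q) \ge J_2(1) = c_2 + P(1-p_2).
\end{align*}
That is, immediate abandonment is the provider's best case whatever the user does. It then throttles one model at a time. This covers the sign transitions $(+,+)\to(-,+)$ and $(+,+)\to(+,-)$ of $(\xi_1,\xi_2)$. It also covers $(+,-)\to(-,-)$ and $(-,+)\to(-,-)$, which pass through the mixed regimes of Theorems~\ref{thm:provider_optimal_regimeC} and~\ref{thm:provider_optimal_regimeD}, where $J^*$ has no closed form; the lower bounds handle these regardless.

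\textbf{What each buys.} The paper's version supports the stronger claim made after the proposition: throttling either model alone already weakly lowers the provider's cost. Your argument as written covers only simultaneous throttling. It extends easily, though, because your Step 3 only needs an upper bound on $J^*_{post}$ from one achievable policy. Route to the cost-of-pass minimizer $j$ with $s=0$:
\begin{itemize}
\item if model $j$ was throttled, this induces $q^*=1$ (Theorem~1, or case 3 of Theorem~2 since $s_0>0$), giving cost $c_j + P(1-p_j) \le c_j/p_j$;
\item otherwise it still induces $q^*=0$ (using $s_L>0$ in case 4 of Theorem~2), giving cost $c_j/p_j$.
\end{itemize}
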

Figure \ref{fig:insights} (Right) visualizes the gain from throttling. For static policies, the gain is $\min_i \{ c_i/p_i \} - \min_i \{c_i + P(1-p_i)\}$, which is linear in $P$. Moreover, throttling maximizes misalignment $\Delta_U$, since providers prefer users to abandon tasks while users would prefer to successfully complete tasks. Most importantly, providers will reduce costs from throttling either (or both) models.

When $P$ exceeds $\min\{c_1/p_1, c_2/p_2\}$, the gain from throttling becomes negative and this misaligned policy loses its dominance. Therefore, to prevent throttling, the price of user abandonment must exceed the cost-per-pass. In practice, this can be realized either by users proactively unsubscribing from LLM providers, by providers voluntarily ensuring a guarantee to users, or by permitting users to opt-out of routing and cascading by self-selecting models.

\section{Conclusion}

We study the problem of LLM routing as an interaction between a provider offering a menu of models and a user subscribing to the LLM service seeking the best model to complete their task.
The user impact of routing depends on the patience of users facing the trade-offs between quality, latency, and cost.
We formalize a Stackelberg game between a single provider with two models, standard and reasoning, and a single user.
The user maximizes their utility from LLM use by abandoning tasks with large inference delay, while the provider minimizes the cost of serving the user and the risk of users leaving the LLM subscription service when they abandon.

We find that user patience for the LLM provider is governed by the sign of net value the user receives from each model.
If both models offer similar value, then the user's action is unaffected by the routing policy. On the other hand, if the two models differ in value, then users are incentivized to abandon their tasks with some probability if the likelihood of being routed or cascaded to the worse model is high.
On the other hand, the optimal provider routing policy is almost always static, showing that there is limited value in cascading except in a narrow regime where the models are differentiated in user net value.
This leads to a provider-user misalignment when the value rankings from a user and the cost-of-pass rankings from a provider disagree.
Most importantly, misalignment can grow extreme when users are unlikely to unsubscribe. This motivates providers to artificially throttle latency, which saves their costs but yields poorer user value. 

The extant routing literature often optimizes over the trade-off between solution quality and latency by estimated per-task success rates \citep{hu2024routerbench}. Our analysis suggests that user behavior can play a core role in determining optimal routing policies.
For instance, the development of this paper involved sustained use of an LLM service, where for example, deriving key theoretic results (i.e., high-value tasks) necessitated user patience from repeated prompts (see Appendix \ref{app:llm-usage} for details).
Consequently, we advocate for LLM routing algorithms to simultaneously estimate both model success rates and user value per task in order to better serve users.

We note several limitations and next steps. First, our analysis is of two models, but the framework can be extended to more models.
Second, our subscription framing abstracts away per-query monetary prices; pay-per-call API pricing can be incorporated by adding a per-query charge to the user utility.
Third, we assume that a user can observe the provider cascade strategy and acts with a stationary abandonment policy. In practice, routing strategies are hidden and users may need to adapt. 
Finally, we treat the per-pass success as i.i.d. and assume fixed user value and provider penalty that are known to both players. In practice, these terms may not always be immediately quantifiable and may require online learning.

\subsubsection*{Acknowledgments}
The author thanks the ICLR 2026 editorial chairs and anonymous referees for providing valuable feedback that improved the content of this paper. This work was funded partially by the Natural Sciences and Engineering Research Council of Canada.

\bibliography{iclr2026_conference}
\bibliographystyle{iclr2026_conference}

\clearpage

\appendix

\appendix
\section{Protocol of LLM Usage}
\label{app:llm-usage}

We used LLMs as assistive tools throughout this project.
The assistance covered: (i) deriving proof steps or searching for counterexamples; (ii) preparing first drafts for sections and editing with feedback; and (iii) coding plots.
All mathematical statements, derivations, and proofs in the final paper were human-verified end-to-end.
All figures and text underwent human editing.

We interacted with the LLM in multi-round sessions. Each session targeted a single goal (e.g., proof sketch, draft writing and editing, or code). For many theoretical results, although we were routed to the complex reasoning model, we found our user behavior to be patient in repeated interactions, suggesting the high user value placed on the work. For certain tasks and over time, we found our behavior to be less patient, suggesting a time-varying nature on user value, which could be studied in future work.
Table~\ref{tab:llm-usage} summarizes typical tasks, the intended role of the LLM, and acceptance criteria.

\begin{table}[t]
  \centering
  \resizebox{\linewidth}{!}{
    \begin{tabular}{p{2.8cm}p{4.3cm}p{4.0cm}p{3.0cm}}
      \toprule
      \textbf{Task type} & \textbf{LLM role} & \textbf{Acceptance criterion} & \textbf{Typical outcome}\\
      \midrule
      Proof sketching & Suggest directions, arguments, and counterexamples & Human derivation reproduces all steps & Often useful after 2--10 reprompts \\
      Algebraic checking & Expand derivatives, monotonicity, or bounding steps & Symbols match manual calculations & Often useful after 2--4 reprompts \\
      Draft writing & Prepare bullet form for text, condense paragraphs and enforce writing style & Preserves technical meaning and improves clarity & Useful with one pass \\
      Figure prototyping & Generate plotting scaffolds & Matches definitions and theorem regimes & Useful with minor edits \\
      \bottomrule
    \end{tabular}
  }
  \caption{Summary of how LLM assistance was used and validated.}
  \label{tab:llm-usage}
\end{table}

\clearpage

\section{Helper Lemmas}

In this section, we outline several lemmas required for the main proofs of the paper.

\begin{lemma}\label{lem:objectives_monotone}
  For any fixed $q$, the objective function for the provider is monotone in $s$.
\end{lemma}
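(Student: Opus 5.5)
The plan is to write $J_1(s,q)$ for fixed $q$ as a linear-fractional function of $s$ and then use the fact that such a function is monotone on any interval where its denominator keeps a constant sign. The case $i=2$ needs no work, because $J_2(q)$ does not depend on $s$ and is therefore trivially monotone. So fix $q\in[0,1]$ and write $\alpha=\alpha(q)$ and $\beta=\beta(q)$.

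Substituting the closed forms gives
\begin{align*}
J_1(s,q) = C_1(s,q) + P - P S_1(s,q) = P + \frac{(c_1 - P p_1) + \alpha\beta s (c_2 - P p_2)}{1-\alpha + \alpha s}.
\end{align*}
This has the form $J_1 = P + (a + b s)/(d + e s)$, with $a = c_1 - Pp_1$, $b = \alpha\beta(c_2 - Pp_2)$, $d = 1-\alpha$ and $e = \alpha$. The denominator $1-\alpha(1-s)$ is strictly positive on $[0,1]$, since $\alpha = (1-p_1)(1-q) \le 1-p_1 < 1$.

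I would then differentiate in $s$:
\begin{align*}
\frac{\partial J_1}{\partial s} = \frac{b d - a e}{(d + e s)^2} = \frac{\alpha\left[\beta(1-\alpha)(c_2 - P p_2) - (c_1 - P p_1)\right]}{\left(1-\alpha(1-s)\right)^2}.
\end{align*}
The numerator does not depend on $s$, and the denominator is positive. So the derivative has a constant sign on $[0,1]$, and $J_1(\cdot,q)$ is monotone: nondecreasing, nonincreasing, or constant when $\alpha=0$ (i.e.\ $q=1$). Monotonicity is all the lemma claims, so the sign itself does not need to be determined.

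The only step that needs care is the algebra that brings $C_1$ and $S_1$ over the common denominator $1-\alpha(1-s)$. Once both are written over that denominator, $J_1$ is a single Möbius map in $s$ and the argument is immediate. If the lemma is meant to cover the user objective as well, the same argument applies verbatim to $U_1 = VS_1 - L_1$, with $(c_i - Pp_i)$ replaced by $(Vp_i - t_i)$.
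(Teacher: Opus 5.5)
Your proof is correct and follows the paper's route: for fixed $q$, $J_1$ is a ratio of affine functions of $s$ over the strictly positive denominator $1-\alpha(1-s)$, so its $s$-derivative is a constant divided by $(1-\alpha(1-s))^2$ and has a fixed sign. Putting $C_1$ and $-PS_1$ over one fraction before differentiating is slightly more careful than the paper's claim that it suffices for $C_1$ and $S_1$ to be monotone separately, since that alone would not fix the sign of the derivative of $C_1 - PS_1$; what makes the argument work is the common denominator, which you use explicitly.
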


\begin{proof}[Proof of Lemma \ref{lem:objectives_monotone}]
  It suffices to prove that $C_1(s)$ and $S_1(s)$ are monotone in $s$. Note that both $C_1(s)$ and $S_1(s)$ can be written as fractions of two affine functions in $s$. As a result, the first derivative of these functions is a ratio of a constant value over a non-negative quadratic function $(1 - \alpha (1-s))^2$.
\end{proof}

\begin{lemma}\label{lem:envelope_user}
  For any fixed $s$, let $U^+_1(\alpha) := (\xi_1 + \xi_2 \alpha s / p_2) / (1 - \alpha (1-s))$ and let $U^-_1(\alpha) := (\xi_1 + \xi_2 \alpha s) / (1 - \alpha(1-s))$. Then for all $q \in [0, 1]$,
  \begin{align*}
    U^-_1(\alpha) \leq U_1(s, q) \leq U^+_1(\alpha) &\qquad \text{if~} \xi_2 \geq 0 \\
    U^+_1(\alpha) \leq U_1(s, q) \leq U^-_1(\alpha) &\qquad \text{otherwise}
  \end{align*}
\end{lemma}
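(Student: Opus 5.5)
The plan is to rewrite $U_1(s,q)$ in terms of the net values $\xi_1,\xi_2$ and the two quantities $\alpha=\alpha(q)$ and $\beta=\beta(q)$. Then, at each fixed $q$, I bound the single factor $\beta(q)$ using its range. Throughout, $U^\pm_1$ are evaluated at the same $\alpha=\alpha(q)$ as $U_1(s,q)$, so both inequalities are pointwise in $q$.

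\textbf{Step 1: rewrite the utility.} From the closed forms of $S_1$ and $L_1$, which share the denominator $1-\alpha(1-s)$, I would write
\begin{align*}
U_1(s,q) = V S_1 - L_1 = \frac{(Vp_1 - t_1) + \alpha\beta s (Vp_2 - t_2)}{1-\alpha(1-s)} = \frac{\xi_1 + \alpha \beta s\, \xi_2}{1-\alpha(1-s)}.
\end{align*}

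\textbf{Step 2: sign of the denominator.} I check that the denominator is strictly positive. Since $\alpha(q)=(1-p_1)(1-q)\le 1-p_1<1$ and $s\in[0,1]$, we get $1-\alpha(1-s)\ge 1-\alpha>0$. Dividing by it therefore preserves inequalities.

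\textbf{Step 3: range of $\beta$.} Next I bound $\beta(q)=1/(p_2+(1-p_2)q)$. Its denominator $p_2+(1-p_2)q$ is a convex combination of $p_2$ and $1$, so it lies in $[p_2,1]$. Hence $1\le \beta(q)\le 1/p_2$ for all $q\in[0,1]$.

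\textbf{Step 4: case on the sign of $\xi_2$.} Multiplying the bounds on $\beta$ by the nonnegative quantity $\alpha s$ gives $\alpha s\le \alpha s\beta\le \alpha s/p_2$. Then:
\begin{itemize}
\item If $\xi_2\ge 0$, multiplying by $\xi_2$ keeps the order: $\alpha s\xi_2\le \alpha\beta s\xi_2\le \alpha s\xi_2/p_2$. Adding $\xi_1$ and dividing by the positive denominator gives $U^-_1(\alpha)\le U_1(s,q)\le U^+_1(\alpha)$.
\item If $\xi_2<0$, multiplying by $\xi_2$ reverses the inequalities, which yields $U^+_1(\alpha)\le U_1(s,q)\le U^-_1(\alpha)$.
\end{itemize}

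Each step is elementary, and I do not expect a genuine obstacle. The only points needing care are the algebra in Step 1 that combines $VS_1-L_1$ over the shared denominator, and the positivity of that denominator in Step 2, which is what makes the inequalities survive division.
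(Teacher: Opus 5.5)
Your proof is correct and takes the same approach as the paper. You write $U_1(s,q) = (\xi_1 + \alpha\beta s\,\xi_2)/(1-\alpha(1-s))$, bound $1 \le \beta(q) \le 1/p_2$, and order the numerators by the sign of $\xi_2$; you also spell out the common-denominator rewriting of $V S_1 - L_1$ and the positivity $1-\alpha(1-s) \ge 1-\alpha > 0$, which the paper's one-line proof uses only implicitly.
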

\begin{proof}[Proof of Lemma \ref{lem:envelope_user}]
  Note that $\beta(q)$ is a decreasing function of $q$ and is bounded $[1, 1/p_2]$. Therefore, for any fixed $s, q$, if $\xi_2 \geq 0$, then $\xi_1 + \xi_2 \alpha s \leq \xi_1 + \xi_2 \alpha \beta s \leq \xi_1 + \xi_2 \alpha s / p_2$. Moreover, the inequalities are reversed when $\xi_2 < 0$.
\end{proof}

\begin{lemma}\label{lem:envelope_bounds}
  Let $K^+(s) := \xi_1 (1-s) + \xi_2 s / p_2$ and let $K^-(s) := \xi_1 (1-s) + \xi_2 s$.
  \begin{enumerate}
    \item $U_1^+(\alpha)$ is monotone increasing (respectively, decreasing) in $\alpha$ and therefore, decreasing (respectively, increasing) in $q$, if $K^+(s) > 0$ (respectively, if $K^+(s) < 0$).
    \item $U_1^-(\alpha)$ is monotone increasing (respectively, decreasing) in $\alpha$ and therefore, decreasing (respectively, increasing) in $q$, if $K^-(s) > 0$ (respectively, if $K^-(s) < 0$).
  \end{enumerate}
\end{lemma}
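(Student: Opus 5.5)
We prove both parts at once, since $U_1^+$ and $U_1^-$ have the same form. For a constant $a$, define the linear-fractional function
\begin{align*}
g_a(\alpha) := \frac{\xi_1 + a\alpha}{1 - \alpha(1-s)} .
\end{align*}
Then $U_1^+ = g_a$ with $a = \xi_2 s/p_2$, and $U_1^- = g_a$ with $a = \xi_2 s$. The plan is to compute the derivative in $\alpha$, read off its sign from the numerator, and translate monotonicity in $\alpha$ into monotonicity in $q$ through the map $\alpha(q) = (1-p_1)(1-q)$.

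\textbf{Derivative.} By the quotient rule,
\begin{align*}
g_a'(\alpha) = \frac{a\,(1-\alpha(1-s)) + (1-s)(\xi_1 + a\alpha)}{(1-\alpha(1-s))^2} = \frac{a + (1-s)\xi_1}{(1-\alpha(1-s))^2}.
\end{align*}
The $\alpha$-terms in the numerator cancel, so the numerator is independent of $\alpha$. With $a = \xi_2 s/p_2$ it equals $K^+(s)$, and with $a = \xi_2 s$ it equals $K^-(s)$.

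\textbf{Denominator.} We need the denominator to be strictly positive on the relevant range. Since $q \in [0,1]$ and $p_1 \in (0,1)$, we have $\alpha \in [0, 1-p_1]$. Together with $s \in [0,1]$, this gives $\alpha(1-s) \le 1-p_1 < 1$, so $1-\alpha(1-s) \ge p_1 > 0$.

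\textbf{Monotonicity in $\alpha$.} Because the denominator is strictly positive and the numerator is the constant $K^\pm(s)$, the sign of $g_a'$ equals the sign of $K^\pm(s)$ throughout the interval. Hence $U_1^+$ (resp. $U_1^-$) is strictly increasing in $\alpha$ when $K^+(s) > 0$ (resp. $K^-(s) > 0$), and strictly decreasing when $K^\pm(s) < 0$.

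\textbf{Monotonicity in $q$.} Since $\alpha(q) = (1-p_1)(1-q)$ and $1-p_1 > 0$, $\alpha$ is strictly decreasing and affine in $q$. Composing with it reverses the direction of monotonicity, which gives the claimed behavior in $q$.

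The only point needing care is the denominator bound, which ensures the sign of the numerator alone decides monotonicity. It holds because $p_1 > 0$ keeps $\alpha(1-s)$ strictly below $1$. The degenerate case $K^\pm(s) = 0$ makes the function constant; it is excluded by the strict inequalities in the statement.
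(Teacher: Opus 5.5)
Your proof is correct and follows the same route as the paper. Both differentiate in $\alpha$, note that the numerator collapses to the constant $K^{\pm}(s)$ over a positive squared denominator, and then reverse the direction of monotonicity through the decreasing map $\alpha(q)=(1-p_1)(1-q)$. Your unified treatment via $g_a$ and the explicit bound $1-\alpha(1-s)\ge p_1>0$ are slightly tidier than the paper's version, whose displayed numerator also drops a factor of $s$ in its first term.
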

\begin{proof}[Proof of Lemma \ref{lem:envelope_bounds}]
  ~

  \begin{enumerate}
    \item We take the first derivative of $U^+_1$ as follows
      \begin{align*}
        \frac{d U_1^+}{d \alpha} = \frac{\xi_2 (1 - \alpha (1-s)) / p_2 + (1-s) (\xi_1 + \xi_2 \alpha s / p_2)}{(1 - \alpha (1-s))^2} = \frac{K^+(s)}{(1 - \alpha (1-s))^2}
      \end{align*}
      Since the denominator is positive for all $s \in [0,1]$, the first derivative is strictly positive if $K^+(s) > 0$ and strictly negative otherwise. Consequently, $U_1^+(s)$ is increasing or decreasing, respectively, in $\alpha$. Finally, note that $\alpha(q)$ is a decreasing function in $q$, meaning that $U_1^+(\alpha(q))$ is decreasing or increasing, respectively in $q$.

    \item We take the first derivative $d U_1^- / d \alpha = K^-(s) / (1 - \alpha (1-s))^2$ and use the same steps to show $U_1^-(\alpha)$ is increasing in $\alpha$ and therefore, decreasing in $q$ if $K^-(s) > 0$ and the reverse otherwise.
  \end{enumerate}
\end{proof}

\begin{lemma}\label{lem:provider_optimal_regimeC_special_helper1}
  Let $P_1 := (c_2 / p_2 -c_1)/(1-p_1)$. If $c_1/p_1 > c_2/p_2$, then $P_1 > (c_2-c_1)/(p_2-p_1)$. Otherwise, $P_1 \leq (c_2-c_1)/(p_2-p_1)$.
\end{lemma}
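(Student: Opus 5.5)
The approach is to compute the difference $P_1 - (c_2-c_1)/(p_2-p_1)$ directly and show that its sign equals the sign of $c_1/p_1 - c_2/p_2$. Both quantities are explicit rational functions of the primitives, so no earlier result is needed. We only use the standing assumptions $0<p_1<p_2<1$ and $c_1<c_2$.

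First, put both terms over the common denominator $(1-p_1)(p_2-p_1)$. This denominator is strictly positive because $p_1<1$ and $p_2>p_1$, so the sign of the difference equals the sign of the numerator
\begin{align*}
N := \Bigl(\frac{c_2}{p_2} - c_1\Bigr)(p_2 - p_1) - (c_2 - c_1)(1-p_1).
\end{align*}

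Second, expand $N$ and cancel terms. The $c_2$ terms cancel, as do the $c_1 p_1$ terms. What remains is $-c_2 p_1/p_2 + c_2 p_1 + c_1 - c_1 p_2$, which factors as
\begin{align*}
N = (1-p_2)\Bigl(c_1 - \frac{c_2 p_1}{p_2}\Bigr) = (1-p_2)\,p_1\Bigl(\frac{c_1}{p_1} - \frac{c_2}{p_2}\Bigr).
\end{align*}

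Third, conclude from the factorization. Since $1-p_2>0$ and $p_1>0$, $N$ has the same sign as $c_1/p_1 - c_2/p_2$. Hence $c_1/p_1 > c_2/p_2$ gives $P_1 > (c_2-c_1)/(p_2-p_1)$. Otherwise $N\le 0$, which gives $P_1 \le (c_2-c_1)/(p_2-p_1)$, with equality exactly when the two costs-of-pass coincide.

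The only step requiring care is the algebraic expansion leading to the clean factorization of $N$; the remaining steps are sign bookkeeping.
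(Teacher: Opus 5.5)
Your proof is correct and follows the paper's argument: you subtract, put the difference over the positive denominator $(1-p_1)(p_2-p_1)$, and factor the numerator as $(1-p_2)(c_1 - c_2 p_1/p_2)$ to read off its sign. One small remark: the assumption $c_1<c_2$ that you list is never used, since only $0<p_1<p_2<1$ matters.
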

\begin{proof}{Proof of Lemma \ref{lem:provider_optimal_regimeC_special_helper1}}
  We subtract
  \begin{align*}
    P_1 - \frac{c_2-c_1}{p_2-p_1} = \frac{\left( \frac{c_2}{p_2} - c_1\right) (p_2 - p_1) - (c_2 - c_1) (1-p_1)  }{(1-p_1)(p_2-p_1)} = \frac{(1-p_2) \left( c_1 - \frac{c_2 p_1}{p_2} \right)}{(1-p_1)(p_2-p_1)}.
  \end{align*}
  From the numerator, this fraction is positive when $c_1 / p_1 > c_2 / p_2$, and is non-positive otherwise.
\end{proof}

\begin{lemma}\label{lem:provider_optimal_regimeC_special_helper2}
  Let $P_2 := (c_1(1-s_0) + c_2 s_0/p_2)/(p_1 + (1-p_1)s_0)$. If $c_1/p_1 < c_2/p_2$, then $P_2 \in (c_1/p_1 , c_2/p_2)$.
\end{lemma}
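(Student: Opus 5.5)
The plan is to write $P_2$ as a ratio of two positive linear expressions and compare it with each endpoint. Set $s := s_0 \in (0,1)$; this holds since $\xi_1 < 0 < \xi_2$ gives $s_0 = -\xi_1/(\xi_2/p_2 - \xi_1) \in (0,1)$. Then
\begin{align*}
  P_2 = \frac{A}{B}, \qquad A := c_1(1-s) + \frac{c_2 s}{p_2}, \qquad B := p_1 + (1-p_1)s > 0 .
\end{align*}
This is a mediant-type ratio. The numerator mixes the per-pass costs $c_1$ and $c_2/p_2$, and the denominator mixes the success probabilities $p_1$ and $1$ with the same weights $(1-s, s)$. Indeed $B = (1-s)p_1 + s\cdot 1$. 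So $P_2$ is the mediant of the fractions $c_1/p_1$ (weight $1-s$) and $(c_2/p_2)/1$ (weight $s$). Because $s \in (0,1)$, both weights are strictly positive. The standard mediant inequality then places $P_2$ strictly between $c_1/p_1$ and $c_2/p_2$ whenever these two differ.

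To make this explicit, I will compute the two differences directly. For the lower bound,
\begin{align*}
  A - \frac{c_1}{p_1} B = c_1(1-s) + \frac{c_2 s}{p_2} - c_1 - \frac{c_1(1-p_1)s}{p_1} = s\left(\frac{c_2}{p_2} - \frac{c_1}{p_1}\right),
\end{align*}
which is strictly positive under the hypothesis $c_1/p_1 < c_2/p_2$ and $s>0$. For the upper bound,
\begin{align*}
  \frac{c_2}{p_2} B - A = \frac{c_2}{p_2}\bigl(p_1 + (1-p_1)s\bigr) - c_1(1-s) - \frac{c_2 s}{p_2} = (1-s)\left(\frac{c_2 p_1}{p_2} - c_1\right) = (1-s)\,p_1\left(\frac{c_2}{p_2} - \frac{c_1}{p_1}\right),
\end{align*}
which is strictly positive because $s<1$. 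Dividing both differences by $B>0$ gives $c_1/p_1 < P_2 < c_2/p_2$.

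The only point needing care is strictness, which requires $s_0 \in (0,1)$ strictly. This follows from $\xi_1<0$ (numerator positive) and $\xi_2/p_2>0$ (denominator exceeds the numerator). I expect no real obstacle beyond checking the algebraic simplification.
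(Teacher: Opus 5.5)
Your proof is correct, and it takes a slightly different route from the paper's. The paper treats $s$ as a variable and differentiates $P_2(s) = \bigl(c_1(1-s) + c_2 s/p_2\bigr)/\bigl(p_1+(1-p_1)s\bigr)$ to get $dP_2/ds = (c_2p_1/p_2 - c_1)/(p_1+(1-p_1)s)^2$. From this it concludes that $P_2(s)$ is strictly increasing when $c_1/p_1<c_2/p_2$, and then uses $P_2(0)=c_1/p_1$, $P_2(1)=c_2/p_2$, and $s_0\in(0,1)$.

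You avoid calculus and compute the two cross-multiplied differences directly, which is the weighted mediant inequality. This gives explicit gaps: $P_2 - c_1/p_1 = s_0(c_2/p_2-c_1/p_1)/B$ and $c_2/p_2 - P_2 = (1-s_0)p_1(c_2/p_2-c_1/p_1)/B$, with $B = p_1+(1-p_1)s_0$. These show that strictness at the lower end comes from $s_0>0$ and at the upper end from $s_0<1$. Your version also makes the paper's phrase ``$P_2$ is a convex combination'' literal, with weight $(1-s_0)p_1/B$ on $c_1/p_1$ and $s_0/B$ on $c_2/p_2$. You also verify $s_0\in(0,1)$ from $\xi_1<0<\xi_2$, which the paper only asserts.

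The paper's monotonicity argument describes the whole curve $s\mapsto P_2(s)$ at once, and it matches the monotonicity arguments the paper uses for $J_1(s,0)$ and $R(s)$ elsewhere. Your sign identities handle the reversed case $c_1/p_1>c_2/p_2$ just as easily, since the signs simply flip.
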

\begin{proof}{Proof of Lemma \ref{lem:provider_optimal_regimeC_special_helper2}}
  We will show below that $P_2$ is a convex combination of $c_1/p_1$ and $c_2/p_2$. Consider the general function
  \begin{align*}
    P_2(s) := \frac{c_1 (1-s) + \frac{c_2}{p_2}s}{p_1 + (1-p_1) s} ~\quad \Longrightarrow\quad ~ \frac{dP_2}{ds} = \frac{\frac{c_2p_1}{p_2} - c_1}{(p_1 + (1-p_1)s)^2}
  \end{align*}
  Note that when $c_1/p_1 < c_2/p_2$, this function $P_2(s)$ is monotone increasing. Furthermore, $P_2(0) = c_1/p_1$ and $P_2(1) = c_2/p_2$. Since $s_0 \in (0, 1)$, $P_2(s_0) =: P_2$ represents a convex combination.
\end{proof}

\begin{lemma}\label{lem:provider_optimal_regimeD_helper1}
  The following identities hold for all $s, q \in [0, 1]$:
  \begin{enumerate}
    \item $S_1(s, q) \geq p_1$
    \item $\min\{ c_1/p_1, \; c_2/p_2 \} \leq (c_1 (1-s) + \beta(q) c_2 s) / (p_1 (1-s) + \beta(q) p_2 s) \leq \max\{ c_1/p_1, \; c_2/p_2\}$
  \end{enumerate}
\end{lemma}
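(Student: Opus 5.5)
For the first identity, I would expand $S_1(s,q)=(p_1+\alpha\beta p_2 s)/(1-\alpha(1-s))$ and compare it directly with $p_1$. The inequality $S_1\ge p_1$ is equivalent to
\begin{align*}
p_1+\alpha\beta p_2 s \geq p_1 - p_1\alpha(1-s),
\end{align*}
that is, to $\alpha\left(\beta p_2 s + p_1(1-s)\right)\geq 0$. Every term is nonnegative, since $\alpha=(1-p_1)(1-q)\in[0,1)$, $\beta\ge 1$ and $s\in[0,1]$, so this holds. The only detail to record is that the denominator $1-\alpha(1-s)$ is strictly positive because $\alpha<1$. This justifies cross-multiplying.

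For the second identity, I would recognize the middle expression as a weighted mediant. Write $a=1-s$ and $b=\beta(q)s$, both nonnegative. The numerator is $a c_1 + b c_2 = (a p_1)(c_1/p_1) + (b p_2)(c_2/p_2)$, and the denominator is $a p_1 + b p_2$. The ratio is therefore a convex combination of $c_1/p_1$ and $c_2/p_2$ with weights $a p_1/(a p_1+b p_2)$ and $b p_2/(a p_1+ b p_2)$. This mirrors the argument of Lemma \ref{lem:provider_optimal_regimeC_special_helper2}, and any convex combination lies between the minimum and the maximum of the two values.

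The main obstacle, and it is a mild one, is confirming that the denominator $a p_1 + b p_2$ never vanishes. If $s<1$, then $a p_1>0$. If $s=1$, then $b=\beta(q)\ge 1>0$, using that $\beta(q)=1/(p_2+(1-p_2)q)\in[1,1/p_2]$ as noted in the proof of Lemma \ref{lem:envelope_user}. Hence the weights are well defined and the bounds follow for every $s,q\in[0,1]$.
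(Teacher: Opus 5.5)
Your proof is correct. The first identity is handled exactly as in the paper: clearing the positive denominator $1-\alpha(1-s)$ reduces $S_1(s,q)\ge p_1$ to $\alpha\bigl(\beta p_2 s + p_1(1-s)\bigr)\ge 0$.

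For the second identity you take a different route. The paper views the middle expression as a function $R(s)$ for fixed $q$. It differentiates to get a numerator $\beta(q)(c_2p_1 - c_1p_2)$ and concludes that $R$ is monotone in $s$, with direction set by the sign of $c_2/p_2 - c_1/p_1$. It then reads off the bounds from the endpoint values $R(0)=c_1/p_1$ and $R(1)=c_2/p_2$.

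You instead write the ratio directly as a weighted mediant, with explicit convex weights $ap_1/(ap_1+bp_2)$ and $bp_2/(ap_1+bp_2)$. Your argument is purely algebraic and needs no case split on which cost-of-pass is smaller. It also checks explicitly that $p_1(1-s)+\beta(q)p_2 s$ never vanishes. The paper's monotonicity-plus-endpoints argument uses this fact only implicitly, since it needs $R$ continuous on $[0,1]$. Your route also avoids computing the derivative, whose displayed form in the paper is missing the square on its denominator.

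What the calculus route buys in return is the direction of monotonicity of $R$ in $s$. That is extra information the lemma as stated does not require.
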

\begin{proof}[Proof of Lemma \ref{lem:provider_optimal_regimeD_helper1}]
  ~

  \begin{enumerate}
    \item We subtract as follows:
      \begin{align*}
        S_1(s, q) - p_1 = \frac{p_1 + \alpha(q) \beta(q) p_2 s}{1 - \alpha(q) (1-s)} - p_1
        = \frac{\alpha(q) \left( \beta(q) p_2 s + p_1 (1-s)\right)}{1 - \alpha(q)(1-s)} \geq 0 \qquad \forall s, q \in [0, 1]
      \end{align*}

    \item Let $R(s) := (c_1 (1-s) + \beta(q) c_2 s) / (p_1 (1-s) + \beta(q) p_2 s)$. Note that
      \begin{align*}
        \frac{dR}{ds} = \frac{\beta(q) \left( c_2 p_1 - p_2 c_1\right)}{p_1 + s ( \beta(q) p_2 - p_1)},
      \end{align*}
      meaning that $R(s)$ is monotonic increasing when $c_2 / p_2 > c_1/p_1$ and is monotonic non-increasing otherwise. Furthermore, note that $R(0) = c_1/p_1$ and $R(1) = c_2/p_2$. Therefore, $R(s)$ is a convex combination of $c_1/p_1$ and $c_2/p_2$.
  \end{enumerate}
\end{proof}

\clearpage

\section{Proofs of Main Results}

\subsection{Proofs for Section \ref{sec:stackelberg_user}}

\begin{proof}[Proof of Theorem \ref{thm:user_m2}]
  We take the first derivative of $U_2(q)$ to show
  \begin{align*}
    \frac{d U_2}{ dq } = - \frac{\xi_2 (1 - p_2)}{(p_2 + (1-p_2)q)^2}
  \end{align*}
  that the derivative is positive for all $q \in [0,1]$ if $\xi_2 < 0$ and is non-positive otherwise. If the derivative is positive, then $U_2(q)$ is monotone increasing and the optimal response is $q^* = 1$. If the derivative is non-positive, then the utility is non-increasing and an optimal response is $q^* = 0$.
\end{proof}

\begin{proof}[Proof of Theorem \ref{thm:user_m1_regimeABCD}]
  ~

  \begin{enumerate}
    \item \textbf{If $\xi_1,\xi_2 \geq 0$:} From Lemma \ref{lem:envelope_user}, $U_1(s, q) \leq U^+_1(\alpha)$. Furthermore because $\xi_1 \geq 0$, from Lemma \ref{lem:envelope_bounds}, $K^+(s) \geq 0$ for all $s \in [0,1]$ and $U^+_1(\alpha)$ is monotone non-decreasing in $q$. Consequently,
      \begin{align*}
        U_1(s, q) \leq U^+_1(\alpha) \leq U^+_1(\alpha(0)) = \frac{\xi_1 + \xi_2 s \frac{1-p_1}{p_2}}{1 - (1-p_1)(1-s)} = U_1(s, 0)
      \end{align*}
      showing that the envelope function is tight at the maximizing point $q^* = 0$.

    \item \textbf{If $\xi_1,\xi_2 < 0$:} From Lemma \ref{lem:envelope_user}, $U_1(s, q) \leq U^-_1(\alpha)$. Furthermore, because $\xi_1 \leq 0$, from Lemma \ref{lem:envelope_bounds}, $K^-(s) < 0$ for all $s \in [0, 1]$ and $U^-_1(\alpha)$ is monotone non-increasing in $q$. Consequently,
      \begin{align*}
        U_1(s, q) \leq U^-_1(\alpha) \leq U^-_1(\alpha(1)) = \xi_1 = U_1(s, 1)
      \end{align*}
      showing that the envelope function is tight at the maximizing point $q^* = 1$.

    \item \textbf{If $\xi_1 < 0 < \xi_2$:} From Lemma \ref{lem:envelope_user}, $U_1(s, q) \leq U^+_1(\alpha)$. Furthermore from Lemma \ref{lem:envelope_bounds}, the envelope function is monotone decreasing in $q$ if $K^+(s) > 0$ and monotone non-decreasing otherwise. Since $K^+(s)$ is a convex combination of a positive and negative value, we have $K^+(s) > 0$ for all $s > s_0$ and $K^+(s) \leq 0$ otherwise.

      We first consider the case of $s > s_0$. Here, we leverage the monotone decreasing property of the envelope
      \begin{align*}
        U_1(s, q) \leq U^+_1(\alpha) \leq U^+_1(\alpha(0)) = \frac{\xi_1 + \xi_2 s \frac{1 - p_1}{p_2}}{1 - (1-p_1)(1-s)} = U_1(s, 1)
      \end{align*}
      Similarly, when $s < s_0$, we leverage the monotone increasing property of the envelope
      \begin{align*}
        U_1(s, q) \leq U^+_1(\alpha) \leq U^+_1(\alpha(1)) = \xi_1 = U_1(s, 1)
      \end{align*}
      Both cases show that the envelope function is tight at their respective maximizing points.

    \item \textbf{If $\xi_1 > 0 > \xi_2$:} We break the proof into three regimes for $s \geq s_H$, $s \leq s_L$, and $s \in (s_L, s_H)$.

      \textbf{For $s \geq s_H$:}
      We first characterize the user best response using the envelope arguments. From Lemma \ref{lem:envelope_user} and Lemma \ref{lem:envelope_bounds}, we have $U_1(s, q) \leq U_1^-(\alpha)$, which is monotone non-increasing if and only if $K^-_(s) \leq 0$. This occurs only in the regime for $s \geq s_H$. Here,
      \begin{align*}
        U_1(s, q) \leq U^-_1(\alpha) \leq U^-_1(\alpha(1)) = \xi_1 = U_1(s, 1)
      \end{align*}
      showing the envelope function is tight at the maximizing point.

      \textbf{For $s \leq s_L$:} For the remaining two regimes, the envelope will not be a tight upper bound, meaning this argument will not be viable and we require an alternate approach.

      For notational simplicity, let $U_1(s, q) = N(s, q) / D(s, q)$ where $N(s, q) := \xi_1 + \xi_2 \alpha(q) \beta(q) s$ and $D(s, q) = 1 - \alpha(q) ( 1- s)$. We seek to identify a constant $s_L$ such that for any $s < s_L$,
      \begin{align*}
        U_1(s, q) - U_1(s, 0) = \frac{N(s, q) D(s, 0) - N(s, 0) D(s, q)}{D(s, q)D(s, 0)} \leq 0  \qquad \forall q \in [0, 1]
      \end{align*}
      Since the denominator of this fraction is always positive, we consider the numerator
      \begin{align}
        & N(s, q) D(s, 0) - N(s, 0) D(s, q) \\
        & = \left( \xi_1 + \frac{(1-p_1)(1-q)}{p_2 + (1-p_2)q} \xi_2 s \right) \left( p_1 + (1-p_1)s \right) \nonumber \\
        & \qquad \qquad - \left( \xi_1 + \frac{1 - p_1}{p_2} \xi_2 s \right) \left( p_1 + (1 - p_1)s + (1 - p_1)(1-s)q\right) \label{eq:proof_user_m1_regimeD_1} \\
        & =  \xi_2 s ( 1 - p_1) (p_1 + (1-p_1)s) \left( \frac{1 - q}{p_2 + (1-p_2)q} - \frac{1}{p_2} \right) - (1-p_1)(1-s)q \left( \xi_1 + \frac{1 - p_1}{p_2} \xi_2 s\right) \label{eq:proof_user_m1_regimeD_2} \\
        & =  - \xi_2 s q ( 1 - p_1)  \frac{ p_1 + (1-p_1)s}{(p_2 + (1-p_2)q) p_2} - (1-p_1)(1-s)q \left( \xi_1 + \frac{1 - p_1}{p_2} \xi_2 s\right) \label{eq:proof_user_m1_regimeD_3} \\
        & = - q (1-p_1) \left(  \frac{p_1 + (1-p_1)s}{p_2 + (1-p_2)q} \frac{\xi_2 s}{p_2}  + \xi_1 (1-s) + (1-p_1)(1-s) \frac{\xi_2 s}{p_2}  \right) \label{eq:proof_user_m1_regimeD_4} \\
        & = - q (1 - p_1) \underbrace{\left(  \xi_1 ( 1-s) + \frac{\xi_2 s}{p_2} \left( \frac{p_1 + (1-p_1)s}{p_2 + (1-p_2)q} + (1-p_1)(1-s) \right)  \right)}_{=:T(s,q)} \label{eq:proof_user_m1_regimeD_5}
      \end{align}
      Above, \eqref{eq:proof_user_m1_regimeD_2} cancels $\xi_1 (p_1 + (1-p_1)s)$ and groups the terms, \eqref{eq:proof_user_m1_regimeD_3} combines the fractions in parentheses, and \eqref{eq:proof_user_m1_regimeD_4} factors $q (1-p_1)$. For notational simplicity, let $T(s, q)$ be the equation within the parentheses of the final equality. 
      Note that $p^2_2T(s, 0) = F(s, 0)$ for $F(s, q)$ as defined in the theorem statement. Rather than characterizing $s_L$ as the root of $F(s_L, 0) = 0$, we will characterize it as the root of $T(s_L, 0) = 0$.

      Given the definition of $T(s, q)$, we observe that for $U_1(s,q) - U_1(s, 0) \leq 0$ for all $q\in[0,1]$, we must have
      \begin{align*}
        T(s, q) \geq \min_{q\in[0,1]} T(s,q) \geq 0.
      \end{align*}
      In order to find the minimizing $q$, we take the derivative below
      \begin{align*}
        \frac{\partial T(s, q)}{\partial q} = - \frac{ (p_1 + (1-p_1)s) (1-p_2) \xi_2 s }{(p_2 + (1-p_2)q)^2 p_2} \geq 0
      \end{align*}
      where the inequality follows from $\xi_2 < 0$. Because the derivative is always non-negative, $T(s, q)$ is monotone non-increasing and the function is minimized at $q = 0$. 

      We now must find a value of $s_L$ such that for all $s < s_L$, we have $T(s, q) \geq 0$, which implies further that  $U_1(s, q) - U_1(s, 0) \geq 0$ for all $q \in [0, 1]$. Note that $T(0,0) = \xi_1 > 0$ and $T(1, 0) = \xi_2 / p_2^2 < 0$. Furthermore, the derivative
      \begin{align*}
        \frac{dT}{ds} = -\xi_1 + \frac{\xi_2}{p_2} \left( \frac{p_1}{p_2} + 1 - p_1 \right) + \frac{2 \xi_2 ( 1 - p_1)(1-p_2)}{p_2^2} < 0
      \end{align*}
      meaning that for the interval $[0, 1]$, $T(s, 0)$ begins positive, decreases monotonically, and ends negative. Therefore, it must have a unique root within the interval $[0, 1]$. Let $s_L$ be this root. Thus, for all $s < s_L$, $T(s, 0)$ is positive and the user best response $q^*(1, s) = 0$.

      \textbf{For $s \in (s_L, s_H)$:} Here, we directly obtain the the optimal $q^*(1, s)$ via the first-order optimality condition $0 = \partial U_1 / \partial q = (N'(s, q) D(s, q) - N(s, q) D'(s,q)) / D(s,q)^2$. Because the denominator is always positive, this condition is equivalent to
      \begin{align*}
        0 &= N'(s, q) D(s, q) - N(s, q) D'(s,q) \\
        & = -\frac{(1-p_1)\xi_2 s}{(p_2 + (1-p_2)q)^2} \left( 1 - (1-p_1)(1-q)(1-s)\right) - \left( \xi_1 + \frac{(1-p_1)(1-q) \xi_2 s}{p_2 + (1-p_2)q} \right)(1-p_1)(1-s)
      \end{align*}
      By factoring out $-(1-p_1)/(p_2 + (1-p_2)q)^2$, we simplify our condition to
      \begin{align}
        0 &= \xi_2 s ( 1 - (1-p_1)(1-q)(1-s) ) + \xi_1 (1-s) (p_2 + (1-p_2)q)^2 \nonumber \\
        & \qquad + \xi_2 s (1-s)(1-p_1)(1-q)(p_2 + (1-p_2)q) \label{eq:proof_user_m1_regimeD_6} \\
        &= \xi_1 (1-s) (p_2 + (1-p_2)q)^2 + \xi_2 s ( 1 - (1-p_1)(1-p_2)(1-s)(1-q)^2 )  \label{eq:proof_user_m1_regimeD_7}  \\
        &= \xi_1 (1-s) p_2^2 + \xi_2 s (1 - (1-p_1)(1-p_2)(1-s)) \nonumber  \\
        & \qquad + 2q(1-s)(1-p_2) \left( \xi_1 p_2 +\xi_2 s (1-p_1)\right) \nonumber  \\
        & \qquad + q^2 (1-s)(1-p_2) \left( \xi_1 (1-p_2) - \xi_2 s (1-p_1) \right)  \label{eq:proof_user_m1_regimeD_8}
      \end{align}
      where \eqref{eq:proof_user_m1_regimeD_6}--\eqref{eq:proof_user_m1_regimeD_8} follow from expanding and collecting like terms.
      Let $F(s, q) := \text{RHS~} \eqref{eq:proof_user_m1_regimeD_8}$ be the implicit function defining this equation. We must now prove that for any $s \in (s_L, s_H)$, there exists a unique $q^\dagger(s)$ that satisfies $F(s, q^\dagger(s)) = 0$.

      To prove that a unique root always exists, we first observe that for any fixed $s$, $F(s, q)$ is a convex function in $q$, since the slope of the first partial derivative is $2 (1-s)(1-p_2) (\xi_1 (1-p_2) - \xi_2 s (1-p_1)) \geq 0$. Furthermore, we show that that $F(s, q)$ takes different signs at the boundaries, notably $\lim_{s\to s_L^+,q\to 0} F(s, q) < 0$ and $\lim_{s\to s_H^-, q\to 1} F(s, q) > 0$, where at the limits, the user response moves towards the optimal boundary solutions.

      For the lower limit
      \begin{align*}
        \lim_{s\to s_L^+, q\to0} F(s, q) &\propto - \lim_{s\to s_L^+, q\to 0} \frac{\partial U_1}{\partial q} \\
        &=- \lim_{s\to s_L^+, q\to 0} \frac{U_1(s, q) - U_1(s, 0)}{q} \\
        &=- \lim_{s\to s_L^+, q\to 0} \frac{-(1-p_1)T(s, q) q}{D(s, 0) D(s, q) q} \\
        &= \lim_{s\to s_L^+, q\to 0} \frac{(1-p_1) T(s, q)}{D(s, q)^2} \\
        &< 0 \text{~ for~} s > s_L.
      \end{align*}
      where the inequality arises from the observation that $T(s) > 0$ in this regime.

      For the upper limit
      \begin{align*}
        \lim_{s\to s_H^- q\to1} F(s, q) & \propto - \lim_{s\to s_H^-, q\to 1} \frac{\partial U_1}{\partial q} \\
        &= - \lim_{s\to s_H^-, q\to 1} \frac{N'(s, q) D(s, q) - N(s, q) D'(s,q)}{D(s, q)^2} \\
        &= \lim_{s\to s_H^-} (1 - p_1) \left( \xi_1 (1-s) + \xi_2 s \right) \\
        &= > 0 \text{~ for~} s < s_H.
      \end{align*}
      By the Intermediate Value Theorem, for any fixed $s \in (s_L, s_H)$, the function $F(s, q) = 0$ passes for some $q$. Because $F(s, q)$ is a quadratic function of $q$, it can change direction at most once and therefore, the root $q^\dagger(s)$ is unique.
  \end{enumerate}

\end{proof}

\subsection{Proofs for Section \ref{sec:stackelberg_provider}}

\begin{proof}[Proof of Theorem \ref{thm:provider_optimal_regimeAB}]
  ~

  \begin{enumerate}
    \item If $\xi_1, \xi_2 > 0$, then by Theorem \ref{thm:user_m1_regimeABCD}, the user best response is always $q^*(i, s) = 0$ for all $i \in \{1,2\}$ and $s \in[0,1]$. We then compare the two separate cases of routing to model 1 and routing to model 2.

      If the provider routes to model 2, their objective function value is $J_2(s, 0) = c_2 \beta(0) = c_2/p_2$.
      On the other hand, if the provider routes to model 1, their objective function value is
      \begin{align*}
        J_1(s, 0) = \frac{c_1 + \frac{1-p_1}{p_2} c_2s}{1 - (1-p_1)(1-s)}.
      \end{align*}
      From Lemma \ref{lem:objectives_monotone}, this function is monotone in $s$. To determine the direction, we take the first derivative
      \begin{align*}
        \frac{\partial J_1(s, 0)}{\partial s} &= \frac{\frac{1-p_1}{p_2} c_2 (1 - (1-p_1)(1-s)) - (1-p_1)\left(c_1 + \frac{1-p_1}{p_2} c_2 s \right)}{(1 - (1-p_1)(1-s))^2} \\
        & = \frac{1-p_1}{(1 - (1-p_1)(1-s))^2} \left( \frac{p_1}{p_2} c_2 - c_1 \right)
      \end{align*}

      First consider the case of $c_1/p_1 < c_2/p_2$. Here, we have $\partial J_1(s, 0) / \partial s > 0$, meaning that $J_1(s, 0)$ is monotone increasing and the optimal $s^* = 0$. Furthermore, $J_1(0, 0) = c_1 / p_1 < c_2 / p_2 < J_2(s, 0)$, meaning that the optimal policy is to route to model 1 without any escalation.

      Conversely if $c_1/p_1 > c_2/p_2$, we have $\partial J_1(s, 0) / \partial s < 0$ meaning that $J_1(1, 0) \leq J_1(s, 0)$ for all $s$. Furthermore, $J_1(1, 0) = c_1 + (1-p_1)c_2/p_2 = c_1 + c_2 / p_2 - p_1 c_2/p_2 > c_2 / p_2 = J_2(s, 0)$ where the inequality falls from the condition on the cost-of-pass. Therefore, in this regime, the optimal policy is to route to model 2.

    \item If $\xi_1, \xi_2 < 0$, then by Theorem \ref{thm:user_m1_regimeABCD}, the user best response is always $q^*(i, s) = 1$ for all $i \in \{1, 2\}$ and $s \in [0,1]$. We can compare the two separate cases of routing to model 1 versus routing to model 2.

      If the provider routes to model 2, their objective function value is $J_2(s, 1) = c_2 + P(1-p_2)$.
      On the other hand, if the provider routes to model 1, their objective function value is $J_1(s, 1) = c_1 + P(1-p_1)$, which is independent from $s$. Comparing these two objective function values yields
      \begin{align*}
        J_1(s, 1) \leq J_2(s, 1) \; \Longleftrightarrow \; P \leq \frac{c_2 - c_1}{p_2 - p_1}.
      \end{align*}
      This completes the proof.
  \end{enumerate}

\end{proof}

\begin{proof}[Proof of Theorem \ref{thm:provider_optimal_regimeC}]
  From Theorem \ref{thm:user_m2}, because $\xi_2 > 0$, the provider's return from routing to model 2 is $J_2(s, q^*) = c_2 / p_2$. From Theorem \ref{thm:user_m1_regimeABCD}, the provider's return from routing to model 1 is dependent on the escalation policy
  \begin{align*}
    J_1(s, q^*) =
    \begin{cases}
      c_1 + P(1-p_1) & \quad \text{ if~} s < s_0 \\
      \displaystyle \frac{c_1 + \frac{1-p_1}{p_2} c_2 s}{p_1 + (1-p_1)s} & \quad \text{ otherwise}
    \end{cases}
  \end{align*}
  Furthermore from the proof to Theorem \ref{thm:provider_optimal_regimeAB}, $J_1(s, q)$ is monotone increasing if $c_1 / p_1 < c_2/p_2$ and is monotone decreasing otherwise. Thus, for $s \geq s_0$, $J_1(s^*, q^*)$ is minimized at $s^* = s_0$ if the cost-of-pass of model 1 is lower than the cost-of-pass of model 2, and is minimized at $s^* = 1$ otherwise. Therefore, determining the provider-optimal policy simplifies to comparing $J_2(s, 0)$, $J_1(0, 1)$, and $\min \{ J_1(s_0, 0), J_1(1, 0) \}$.

  We first consider the case where $c_1/p_1 > c_2/p_2$. Here $J_1(1, 0) < J_1(s_0, 0)$ due to the monotone decreasing nature of the objective. Furthermore,
  \begin{align*}
    J_1(1, 0) = c_1 + \frac{1-p_2}{p_1} c_2 = c_1 + \frac{c_2}{p_2} - \frac{p_1 c_2}{p_2} > \frac{c_2}{p_2} = J_2(s,0)
  \end{align*}
  meaning that routing to model 1 and escalating is strictly dominated by routing directly to model 2. Therefore, the optimal policy is to route to model 1 and set $s = 0$ if $c_1 + P(1-p_1) < c_2/p_2$. Re-arranging this inequality yields the condition $P < P_1$. Conversely, if the condition is not satisfied, the optimal policy is to route to model 2.

  We next consider the case where $c_1/p_1 < c_2/p_2$. Here, $J_1(s_0,0) < J_1(1, 0)$ due to the monotone increasing nature of the objective. Furthermore
  \begin{align*}
    J_1(s_0, 0) - J_2(s, 0) &= \frac{c_1 + \frac{1-p_1}{p_2}c_2 s_0}{p_1 + (1-p_1)s_0} - \frac{c_2}{p_2} = \frac{c_1 - c_2}{p_1 + (1-p_1)s_0} < 0
  \end{align*}
  meaning that routing to model 2 is strictly dominated by first routing to model 1 and then escalating with probability $s_0$. Therefore, the optimal policy becomes to route to model 1 and set $s = 0$ if $c_1 + P(1-p_1) < J_1(s_0, 0)$. Re-arranging this inequality yields the condition $P < P_2$.
\end{proof}

\begin{proof}[Proof of Theorem \ref{thm:provider_optimal_regimeD}]
  To determine the provider-optimal policy, we must check the objective function in four regimes: $J_1(s, 0)$ for $s \in [0, s_L]$, $J_1(s, q^\dagger(s))$ for $s \in (s_L, s_H)$, $J_1(1, 1)$, and $J_2(1)$. Note that $J_1(1, 1)$ subsumes $J_1(s, 1)$ for $s \in [s_H, 1]$, as the objective function is constant in this regime. For each of the three cases in the theorem statement, we prove optimality by checking against these regimes.

  \begin{enumerate}
    \item Note that $J_1(0, 0) = c_1/p_1$. We first show that $s = 0$ is optimal within the regime $[0, s_L]$ by proving that the objective function is monotone increasing in this regime:
      \begin{align*}
        \frac{\partial J_1 (s, 0)}{\partial s} = \frac{1-p_1}{(1-(1-p_1)(1-s))^2} \left( \frac{p_1}{p_2} c_2 - c_1\right) > 0
      \end{align*}
      where the inequality arises from $c_1/p_1 < c_2/p_2$. Therefore, $J_1(0, 0) \leq J_1(s, 0) \forall s \in [0, s_L]$.

      To show that $J_1(0, 0) \leq J_1(s, q^\dagger(s)) \forall s \in (s_L, s_H)$, we take the difference
      \begin{align}
        &J_1(s, q^\dagger(s)) - J_1(0, 0) \label{eq:proof_thm_provider_optimal_regimeD1} \\
        & \qquad = C_1(s, q^\dagger(s)) - \frac{c_1}{p_1} + P \left( 1 - S_1(s, q^\dagger(s)) \right) \label{eq:proof_thm_provider_optimal_regimeD2} \\
        & \qquad = C_1(s, q^\dagger(s)) - c_1 - \frac{c_1}{p_1}(1 - p_1) + P \left( 1 - S_1(s, q^\dagger(s)) \right) \label{eq:proof_thm_provider_optimal_regimeD3} \\
        & \qquad = C_1(s, q^\dagger(s)) - c_1 - \frac{c_1}{p_1}\left( S_1(s, q^\dagger(s)) - p_1 \right) + \left(1 - S_1(s, q^\dagger(s)) \right) \left( P - \frac{c_1}{p_1} \right) \label{eq:proof_thm_provider_optimal_regimeD4} \\
        & \qquad = \left(\frac{C_1(s, q^\dagger(s)) - c_1}{ S_1(s, q^\dagger(s)) - p_1} - \frac{c_1}{p_1} \right) \left( S_1(s, q^\dagger(s)) - p_1 \right) + \left(1 - S_1(s, q^\dagger(s)) \right) \left( P - \frac{c_1}{p_1} \right) \label{eq:proof_thm_provider_optimal_regimeD5} \\
        & \qquad \geq 0 \label{eq:proof_thm_provider_optimal_regimeD6}
      \end{align}
      where \eqref{eq:proof_thm_provider_optimal_regimeD3} and \eqref{eq:proof_thm_provider_optimal_regimeD4} follow from adding and subtracting $c_1$ and $S_1(s, q^\dagger(s))$, respectively. Then, \eqref{eq:proof_thm_provider_optimal_regimeD5} follows from re-arranging the terms, and \eqref{eq:proof_thm_provider_optimal_regimeD6} follows from Lemma \ref{lem:provider_optimal_regimeD_helper1}.

      We then show that $s = 0$ yields lower cost than $s=1$:
      \begin{align*}
        \frac{c_1}{p_1} < P ~ \Longrightarrow ~ \frac{c_1}{p_1}(1-p_1) < P(1-p_1) ~ \Longrightarrow ~ \frac{c_1}{p_1} < c_1 + P (1-p_1) = J_1(1, 1)
      \end{align*}

      Finally, to show that $(i, s) = (1, 0)$ yields lower cost than routing directly to model 2, we first observe the identity
      \begin{align*}
        \frac{c_1}{p_1} - \frac{c_1/p_1 - c_2}{1 - p_2} = \frac{c_2 p_1 - c_1 p_2}{p_1 (1-p_2)} \geq 0 \text{~if~} \frac{c_2}{p_2} \geq \frac{c_1}{p_1}
      \end{align*}
      Then,
      \begin{align*}
        P \geq \frac{c_1}{p_1} \geq \frac{c_1/p_1 - c_2}{1 - p_2} ~ \Longrightarrow ~ \frac{c_1}{p_1} \leq c_2 + P(1-p_2) = J_2(1)
      \end{align*}

    \item Note that $J_1(1, 1) = c_1 + P ( 1 - p_2) = J_1(1, s) \forall s \in [s_H, 1]$. We first show that routing and escalating offers a lower objective function value than routing immediately to model 2:
      \begin{align*}
        P < \frac{c_2 - c_1}{p_2 - p_1} ~\Longrightarrow~ c_1 + P (1 - p_1) < c_2 + P (1 - p_2) = J_2(1)
      \end{align*}

      To show that $J_1(1, 1) \leq J_1(s, q^\dagger(s)) \forall s \in (s_L, s_H)$, we take the difference
      \begin{align}
        J_1(s, q^\dagger(s)) - J_1(1, 1) &= C_1(s, q^\dagger(s)) - c_1 - P\left( S_1(s, q^\dagger(s)) - p_1 \right) \\
        &= \left(\frac{C_1(s, q^\dagger(s)) - c_1}{S_1(s, q^\dagger(s)) - p_1} - P \right)\left( S_1(s, q^\dagger(s)) - p_1 \right) \\
        &\geq \left(\frac{c_1}{p_1} - P \right)\left( S_1(s, q^\dagger(s)) - p_1 \right) \\
        &\geq 0
      \end{align}
      where the second inequality follows from Lemma \ref{lem:provider_optimal_regimeD_helper1}.

      Finally, to show that $J_1(1, 1) \leq J_1(s, 0) \forall s \in [0, s_L]$, we first note from Lemma \ref{lem:objectives_monotone} that $J_1(s, 0) = (c_1 + (1-p_1)c_2 s / p_2) / (p_1 + (1-p_1)s)$ is monotone increasing when $c_2/p_2 > c_1/p_1$ and is monotone decreasing when $c_2/p_2 < c_1/p_1$. If this function is monotone decreasing, then
      \begin{align*}
        J_1(s, 0) \geq J_1(0, 0) = \frac{c_1}{p_1} > P ~ \Longrightarrow ~ \frac{c_1}{p_1} > c_1 + P (1-p_1) = J_1(1, 1).
      \end{align*}
      On the other hand if $J_1(s, 0)$ is increasing, then
      \begin{align}
        J_1(s, 0) - J_1(1, 1) &\geq J_1(1, 0) - J_1(1, 1) \label{eq:proof_thm_provider_optimal_regimeD7} \\
        &= c_1 + \frac{1-p_1}{p_2} c_2 - c_1 - P(1-p_1) \label{eq:proof_thm_provider_optimal_regimeD8} \\
        &= (1-p_1) \left( \frac{c_2}{p_2} - P \right) \label{eq:proof_thm_provider_optimal_regimeD9} \\
        &\geq (1-p_1) \left( \frac{c_2}{p_2} - \frac{c_2 - c_1}{p_2 - p_1} \right) \label{eq:proof_thm_provider_optimal_regimeD10} \\
        &= \frac{1-p_1}{p_2 (p_2 - p_1)} \left( c_1 p_2 - c_2 p_1 \right) \label{eq:proof_thm_provider_optimal_regimeD11} \\
        &\geq 0 \label{eq:proof_thm_provider_optimal_regimeD12}
      \end{align}
      where \eqref{eq:proof_thm_provider_optimal_regimeD10} follows from $P < (c_2 - c_1)/(p_2 - p_1)$, \eqref{eq:proof_thm_provider_optimal_regimeD11} re-arranges the terms, and \eqref{eq:proof_thm_provider_optimal_regimeD12} follows from the assumption that $c_1 /p_1 > c_2 /p_2$.

    \item Note that $J_2(1) = c_2 + P(1-p_2)$. We first show that routing to model 2 immediately is more cost-effective than routing to model 1 and escalating on failure:
      \begin{align*}
        P < \frac{c_2 - c_1}{p_2 - p_1} ~ \Longrightarrow ~ c_2 + P(1-p_2) < c_1 + P(1-p_1) = J_1(1, 1).
      \end{align*}

      To show that $J_2(1) \leq J_1(s, q^\dagger(s)) \forall s \in (s_L, s_H)$, we take the difference
      \begin{align}
        & J_1(s, q^\dagger(s)) - J_2(1) \\
        & \qquad = C_1(s, q^\dagger(s)) - c_2 - P \left( S_1(s, q^\dagger(s)) - p_2 \right) \\
        & \qquad  = C_1(s, q^\dagger(s)) - c_1 - (c_2 - c_1) - P \left( S_1(s, q^\dagger(s)) - p_1 \right) + P (p_2 - p_1) \\
        & \qquad  = \left( \frac{C_1(s, q^\dagger(s)) - c_1}{S_1(s, q^\dagger(s)) - p_1} - P\right) \left( S_1(s, q^\dagger(s)) - p_1 \right) + P \left( p_2 - p_1 - (c_2 - c_1) \right) \\
        & \qquad \geq 0
      \end{align}
      where the inequality follows from Lemma \ref{lem:provider_optimal_regimeD_helper1}.

      Finally, to show $J_2(1) \leq J_1(s, 0) \forall s \in[0, s_L]$, we note
      \begin{align*}
        P < \frac{c_1}{p_1} ~ \Longrightarrow ~ c_1 + P (1-p_1) < \frac{c_1}{p_1} = J_1(0, 0) \leq J_1(s, 0) \; \forall s \in [0, s_L]
      \end{align*}
      where the final inequality follows from the monotone increasing nature of $J_1(s, 0)$.
  \end{enumerate}
\end{proof}

\subsection{Proofs for Section \ref{sec:misalignment}}

\begin{proof}[Proof of Proposition \ref{propn:misalignments}]
  ~

  \begin{enumerate}
    \item \textbf{If $\xi_1, \xi_2 > 0$:} then $q^*(i, s) = 0$ for all $(i, s)$ by Theorem \ref{thm:user_m1_regimeABCD}. Moreover, $(i^*, s^*) = (\argmin_i\{c_1/p_1, c_2/p_2\}, 0)$ from Theorem \ref{thm:provider_optimal_regimeAB}. To determine the user optimal routing strategy, we can compare
      \begin{align*}
        U_1(s, 0) - U_2(0) &= \frac{\xi_1 + \frac{1-p_1}{p_2} \xi_2 s}{p_1 + (1-p_1)s} - \frac{\xi_2}{p_2} = \frac{\frac{\xi_2 p_1}{p_2} - \xi_1}{p_1 + (1-p_1)s}.
      \end{align*}
      If this difference is positive, then $(1, 0)$ is user-optimal, or otherwise $(2, 0)$.
      Thus, $\argmax_{i, s} U_i(s, q^*(i, s)) = (2, 0)$ when $\xi_2/p_2 > \xi_1/p_1$, and is equal to $(1, 0)$ otherwise. Thus, $\Delta_U$ is equal to $0$ if and only if $\xi_2/p_2 > \xi_1/p_1$ and $c_1/p_1 > c_2/p_2$ or if the signs of both inequalities are reversed.

    \item \textbf{If $\xi_1, \xi_2 < 0$:} then $q^*(i, s) = 1$ for all $(i, s)$ by Theorem \ref{thm:user_m1_regimeABCD}. Moreover, $(i^*, s^*) = (1, s)$ if $P \leq (c_2 - c_1)/(p_2 - p_1)$ and is equal to $(2, 0)$ otherwise. To determine the user optimal routing strategy, we compare
      \begin{align*}
        U_2(1) - U_1(s, 1) &= \xi_2 - \xi_1,
      \end{align*}
      meaning that $\argmax_{i, s} U_i(s, q^*(i, s) = (2, 0)$ when $\xi_2 > \xi_1$ and is equal to $(1, s)$ otherwise. Thus, $\Delta_U$ is equal to $0$ if and only if $\xi_2 > \xi_1$ and $P < (c_2 - c_1)/(p_2 - p_1)$ or if the signs of both inequalities are reversed.

      \item \textbf{If $\sign(\xi_1) \neq \sign(\xi_2)$:} We break this into two cases.
        \begin{enumerate}
          \item \textbf{If $\xi_1 < 0 < \xi_2$:} From Theorem \ref{thm:user_m1_regimeABCD}, $q^*(2,s) = 0$ and $q^*(1,s) = \Ind\{s < s_0\}$. To determine the user optimal routing strategy, we can compare $U_2(0)$ versus $U_1(s, q^*(1, s))$. First, for any $s < s_0$:
            \begin{align*}
              U_2(0) - U_1(s, 1) &= \frac{\xi_1 + \frac{1-p_1}{p_2}\xi_2}{p_1 + (1-p_1)s} - \xi_1 = \frac{\xi_1 (1-p_1)(1-s) + \frac{1-p_1}{p_2}\xi_2 s}{p_1 + (1-p_1)s} > 0.
            \end{align*}
            Next for any $s \geq s_0$:
            \begin{align*}
              U_2(0) - U_1(s, 0) &= \frac{\frac{\xi_2 p_1}{p_2} - \xi_1}{p_1 + (1-p_1) s} > 0.
            \end{align*}
            Thus, $\Delta_U$ is equal to $0$ if and only if $(i^*, s^*) = (2, 0)$.

          \item \textbf{If $\xi_1 > 0 > \xi_2$:} From Theorem \ref{thm:user_m1_regimeABCD}, $q^*(2,s) = 1$ and $q^*(1, s)$ is determined by three regimes. Starting from Lemma \ref{lem:envelope_bounds},
            \begin{align*}
              U_1(s, q) \leq U_1^-(\alpha) = \frac{\xi_1 + \xi_2 \alpha s}{1 - \alpha(1-s)} \leq \frac{\xi_1}{p_1} = U_1(0, 0)
            \end{align*}
            where the second inequality arises from substituting $q = 0, s=0$ into $U_1^-(\alpha)$. Therefore, the user prefers routing to model 1 without cascading among all choices that involve routing to model 1. Finally, we compare against model 2 to show $U_1(0, 0) - U_2(1) = \xi_1/p_1 - \xi_2 > 0$, meaning that the user utility is maximized overall by routing to model 1 and not escalating. Therefore, $\Delta_U = 0$ if and only if $(i^*, s^*) = (1, 0)$.
        \end{enumerate}
    \end{enumerate}
  \end{proof}

  \subsection{Proofs for Section \ref{sec:throttling}}

  \begin{proof}[Proof of Proposition \ref{propn:throttling_models}]

    Before proving, we note that when $P \leq \min\{ c_1/p_1 , c_2/p_2 \}$, we have the following two identities. First,
    \begin{align*}
      J_1(s, q) - J_1(0, 1) &= C_1(s, q) - c_1 - P(S_1(s, q) - p_1) \\
      &= \left( \frac{C_1(s, q) - c_1}{S_1(s, q) - p_1} - P \right) (S_1(s, q) - p_1) \\
      &\geq 0
    \end{align*}
    where the inequality rises from Lemma \ref{lem:provider_optimal_regimeD_helper1}, showing $(C_1(s, q) - c_1)/(S_1(s, q) - p_1) \geq \min\{c_1/p_1, c_2/p_2\} \geq P$. Second,
    \begin{align*}
      J_2(q) - J_2(1) &= C_2(q) - c_2 - P(S_2(q) - p_2) \\
      &=\left( \frac{C_2(q) - c_2}{S_2(q) - p_2} - P\right)(S_2(q) - p_2) \\
      &=\left( \frac{c_2}{p_2} - P\right)(\beta(q) - 1) p_2 \\
      &\geq 0
    \end{align*}
    where the inequality again rises from Lemma \ref{lem:provider_optimal_regimeD_helper1}.

    We now break the proof into two cases, each with two sub-cases.

    \begin{enumerate}
      \item \textbf{The effect of throttling model 1:}
        \begin{enumerate}
          \item \textbf{When $\xi_2 > 0$:} We must compare the effect on the provider-optimal cost between the settings $\xi_1, \xi_2 > 0$, where we let $J_{pre}^*$ be the optimal cost; and $\xi_2 > 0 > \xi_1$ where we let $J_{post}^*$ be the optimal cost. Then,
            \begin{align}
              J_{post}^* &\leq \min\{J_1(0, 1), J_1(s_0, 0), J_2(0)\} \label{eq:proof_propn_throttling_m1_1}\\
              &= \min\{J_1(0, 1), J_2(0) \} \label{eq:proof_propn_throttling_m1_2}\\
              &= \min\left\{ c_1 + P(1-p_1), \frac{c_2}{p_2} \right\} \label{eq:proof_propn_throttling_m1_3}\\
              &\leq \min\left\{ \frac{c_1}{p_1} , \frac{c_2}{p_2} \right\} \label{eq:proof_propn_throttling_m1_4}\\
              &=J_{pre}^* \label{eq:proof_propn_throttling_m1_5}
            \end{align}
            where \eqref{eq:proof_propn_throttling_m1_1} follows from Theorem \ref{thm:provider_optimal_regimeC}, \eqref{eq:proof_propn_throttling_m1_2} follows from the identity stated at the start of the proof, \eqref{eq:proof_propn_throttling_m1_3} follows from substituting the the terms, and \eqref{eq:proof_propn_throttling_m1_4} follows from $P < c_1/p_1$. Finally, \eqref{eq:proof_propn_throttling_m1_5} follows from Theorem \ref{thm:provider_optimal_regimeAB}.

          \item \textbf{When $\xi_2 < 0$:} We must compare the effect on the provider-optimal cost between the settings $\xi_1 > 0 > \xi_2$ and $0 > \xi_1, \xi_2$. Let $J_{pre}^*$ and $J_{post}^*$ be the provider-optimal cost in the first and second setting, respectively. Then,
            \begin{align}
              J_{pre}^* &= \min\{ J_2(1), J_1(0, 0), J_1(s_L, 0), J_1(s, q^\dagger(s)), J_1(1, 1) \} \label{eq:proof_propn_throttling_m1_6} \\
              &\geq \min\{ J_2(1), J_1(0, 1) \} \label{eq:proof_propn_throttling_m1_7} \\
              &= \min \left\{ c_2 + P(1-p_2), c_1 + P (1-p_1) \right\} \label{eq:proof_propn_throttling_m1_8}\\
              &= J_{post}^* \label{eq:proof_propn_throttling_m1_9}
            \end{align}
            where \eqref{eq:proof_propn_throttling_m1_6} follows from applying Theorem \ref{thm:user_m1_regimeABCD}, \eqref{eq:proof_propn_throttling_m1_7} follows from the identity at the start of the proof, and \eqref{eq:proof_propn_throttling_m1_9} follows from Theorem \ref{thm:provider_optimal_regimeAB}.
        \end{enumerate}

      \item \textbf{The effect of throttling model 2:}
        \begin{enumerate}
          \item \textbf{When $\xi_1 > 0$:} We must compare the effect of the provider-optimal cost between the settings $\xi_1, \xi_2 > 0$ where we let $J_{pre}^*$ be the optimal cost; and $\xi_1 > 0 > \xi_2$ where we let $J_{post}^*$ be the optimal cost. Then,
            \begin{align*}
              J_{post}^* & \leq \min \left\{ J_1(0, 0), J_2(1) \right\} \leq \min\{ J_1(0, 0), J_2(0) \} = J_{pre}^*
            \end{align*}
            where the first inequality follows from applying Theorem \ref{thm:user_m1_regimeABCD} and the second inequality follows from Theorem \ref{thm:provider_optimal_regimeAB}.

          \item \textbf{When $\xi_1 < 0$:} We must compare the effect of the provider-optimal cost between the settings $\xi_1 < 0 < \xi_2$ where we let $J_{pre}^*$ be the optimal cost; and $\xi_1, \xi_2 < 0$ where we let $J_{post}^*$ be the optimal cost. Then,
            \begin{align*}
              J_{pre}^* &= \min\{ J_1(0, 1), J_1(s_0, 0), J_2(0) \} \geq \min\{ J_1(0, 1), J_2(0) \} \\
              &\geq \min\{J_1(0, 1), J_2(1) \} = J_{post}^*
            \end{align*}
            where both inequalities draw on the identities presented at the start of the proof.
        \end{enumerate}

    \end{enumerate}
  \end{proof}

  \end{document}